\documentclass{article}
\usepackage[a4paper]{geometry}

\usepackage{tikz}
\usetikzlibrary{matrix,calc,trees}
\tikzset{my arrow/.style={blue!60!black,-latex},
  set@com@col/.style={},set@com@col@aryarg/.style={column #1/.style={set@com@col}},
  set@com@row/.style={},set@com@row@aryarg/.style={row #1/.style={set@com@row}},
  set common column/.style 2 args={set@com@col/.style={#2}, set@com@col@aryarg/.list={#1}},
  set common row/.style 2 args={set@com@row/.style={#2}, set@com@row@aryarg/.list={#1}},
}

\usepackage{float}
\usepackage{graphicx}
\usepackage{balance} 
\usepackage{multirow}
\usepackage{booktabs}
\usepackage{hyperref}
\usepackage{nicefrac}
\usepackage{enumitem}

\title{Faster motif counting via succinct color coding\linebreak and adaptive sampling}

\usepackage{bbm}
\usepackage{amsmath}
\usepackage{amssymb}
\usepackage{algorithm}
\usepackage[noend]{algpseudocode}
\usepackage{relsize}
\usepackage{xspace}

\usepackage[bottom]{footmisc}

\usepackage{amsthm}
\newtheorem{theorem}{Theorem}
\newtheorem{lemma}{Lemma}

\author{
Marco Bressan\\
Universit\`a degli Studi di Milano\\
\texttt{marco.bressan@unimi.it}
\and
Stefano Leucci\\
Universit\`a dell'Aquila\\
\texttt{stefano.leucci@univaq.it}
\and
Alessandro Panconesi\\
Sapienza Universit\`a di Roma\\
\texttt{ale@di.uniroma1.it}
} 

\date{}

\begin{document}

\newtheorem{hypothesis}{Hypothesis}

\newcommand{\dsstyle}[1]{\textsc{\small{#1}}}
\newcommand{\infroadusa}{\dsstyle{Road-US}}
\newcommand{\facebook}{\dsstyle{Facebook}}
\newcommand{\orkut}{\dsstyle{Orkut}}
\newcommand{\yeast}{\dsstyle{YeastProtein}}
\newcommand{\yelp}{\dsstyle{Yelp}}
\newcommand{\hollywood}{\dsstyle{Hollywood}}
\newcommand{\lj}{\dsstyle{LiveJournal}}
\newcommand{\wordassoc}{\dsstyle{WordAssoc}}
\newcommand{\twitter}{\dsstyle{Twitter}}
\newcommand{\skitter}{\dsstyle{Skitter}}
\newcommand{\patents}{\dsstyle{Patents}}
\newcommand{\amazon}{\dsstyle{Amazon}}
\newcommand{\fs}{\dsstyle{Friendster}}
\newcommand{\dblp}{\dsstyle{Dblp}}
\newcommand{\roadca}{\dsstyle{Road-CA}}
\newcommand{\roadpa}{\dsstyle{Road-PA}}
\newcommand{\bstan}{\dsstyle{Berk\-Stan}}
\newcommand{\LAW}{\small{LAW}}
\newcommand{\SWS}{\small{MPI-SWS}}
\newcommand{\NDR}{\small{NDR}}
\newcommand{\SNAP}{\small{SNAP}}
\newcommand{\YLP}{\small{YLP}}
\newcommand{\prob}{\operatorname{Pr}}
\newcommand{\bfx}{\mathbf{x}}
\newcommand{\bfc}{\mathbf{c}}
\newcommand{\bfy}{\mathbf{y}}
\newcommand{\E}{\mathbb{E}}
\newcommand{\gk}{\mathcal{G}_k}
\newcommand{\vk}{\mathcal{V}}
\newcommand{\ek}{\mathcal{E}}
\newcommand{\gnk}{\mathcal{G}}
\newcommand{\var}[1]{\operatorname{Var}[#1]}
\newcommand{\cov}{\operatorname{Cov}}
\newcommand{\cerr}{\bar{c}}
\newcommand{\glet}{\ensuremath{x}}
\newcommand{\motivo}{\textsc{Motivo}}
\newcommand{\motiveight}{\textsc{L8Motif}}
\newcommand{\cc}{\textsc{cc}}
\newcommand{\Rplus}{\protect\hspace{-.1em}\protect\raisebox{.35ex}{\smaller{\smaller\textbf{+}}}}
\newcommand{\Cpp}{\mbox{C\Rplus\Rplus}\xspace}
\newcommand{\labelspacing}{6pt}
\hyphenation{graph-let}
\hyphenation{graph-lets}
\newcommand{\sample}{\ensuremath{\operatorname{sample}}}

\newcommand{\tc}{T_C}
\newcommand{\tca}{T'_{C'}}
\newcommand{\tcb}{T''_{C''}}
\newcommand{\ta}{T'}
\newcommand{\tb}{T''}
\newcommand{\transpose}{{\mathsmaller T}}

\maketitle

\begin{abstract}
We address the problem of computing the distribution of induced connected subgraphs, aka \emph{graphlets} or \emph{motifs}, in large graphs.
The current state-of-the-art algorithms estimate the motif counts via uniform sampling by leveraging the color coding technique by Alon, Yuster and Zwick.
In this work we extend the applicability of this approach by introducing a set of algorithmic optimizations and techniques that reduce the running time and space usage of color coding and improve the accuracy of the counts.
To this end, we first show how to optimize color coding to efficiently build a compact table of a representative subsample of all graphlets in the input graph.
For $8$-node motifs, we can build such a table in one hour for a graph with $65$M nodes and $1.8$B edges, which is $2000$ times larger than the state of the art.
We then introduce a novel adaptive sampling scheme that breaks the ``additive error barrier'' of uniform sampling, guaranteeing multiplicative approximations instead of just additive ones.
This allows us to count not only the most frequent motifs, but also extremely rare ones.
For instance, on one graph we accurately count nearly $10.000$ distinct $8$-node motifs whose relative frequency is so small that uniform sampling would literally take centuries to find them.
Our results show that color coding is still the most promising approach to scalable motif counting.
\end{abstract}

\section{Introduction}
Counting the number of copies of a given pattern graph in a host graph is one of the basic graph mining primitives, with applications in network analysis~\cite{Abdelzaher&2015}, graph classification~\cite{Yaveroglu&2014}, graph clustering~\cite{Yin&2017}, and biology~\cite{Yeger-Lotem5934}.
Of particular interest are the subgraphs that are induced and connected, which are commonly known as \emph{graphlets} or \emph{motifs}.
Indeed, motifs are often seen as ``high-order edges'' that are the true building blocks of real-world networks and give fundamental insights into the nature of a graph~\cite{Jha&2015,interactome,Yaveroglu&2014,Yin&2017}.
The problem of counting the number of copies of a given motif in a graph has a long and rich history, which started with triangle counting and evolved towards larger and more complex motifs~\cite{Ahmed&2015,Bhuiyan&2012,Chakaravarthy&2016,Chen&2016,Han&2016,Jha&2015,Pinar&2017,Slota&2013,Wang&2014,Wang&2015,Wang&2016,Zhao&2010}.
Entire frameworks have been designed to make motif mining easy, including systems based on graph databases such as Arabesque~\cite{Arabesque} and GraphSig~\cite{graphsig}, or standalone systems such as Fractal~\cite{Fractal} and AutoMine~\cite{Automine}.

Unfortunately, motif counting becomes quickly intractable with the size $k$ of the motif.
For this reason exact counting is practically feasible only for $k \le 5$, save for special cases such as counting cliques in sparse graphs.
This hardness is not surprising, since the problem is widely believed to require time $n^{\Omega(k)}$~\cite{Chen&2006} where $n$ is the number of nodes in the input graph.
The natural approach to overcome this barrier is to abandon exact counting in favor of approximate counting.
Approximate counting can replace exact counting in many cases, such as in hypothesis testing (deciding if a graph comes from a certain distribution or not) or in estimating the clustering coefficient of a graph (the fraction of triangles among $3$-node motifs).
In this work we focus on approximate motif counting, with special attention on guarantees.
More precisely, we aim to estimate as accurately as possible the number of occurrences of \emph{every} possible distinct motif on $k$ nodes (the star, the clique, the path, etc.)\ in a graph.
Formally, suppose we are given a simple graph $G$ (e.g., a social network), an integer $k > 2$, and two approximation parameters $\epsilon,\delta \in (0,1)$.
For each motif $H$ on $k$ nodes (the star, the clique, the path, etc.), we want an estimate of the number of induced copies of $H$ in $G$, so that with probability at least $1-\delta$ all estimates are within a factor $(1 \pm \epsilon)$ of the actual values.
Our goal is to develop practical algorithms that solve this problem for $G$ and $k$ significantly larger than the state of the art.
This means we aim at graphs $G$ with billions of edges and motifs on more than $5$ nodes.
Note that we are looking at \emph{induced} copies; counting non-induced copies can be significantly easier (think of the stars).
We also remark that, following all previous literature, graphlets are defined as connected.

The most natural approach to motif counting is combinatorial counting.
Unfortunately, this approach requires enumerating and/or counting a number of subgraphs that can grow as $n^{\Omega(k)}$, and therefore does not scale to large $G$ and $k$.
Indeed, even state-of-the-art exact counting algorithms such as~\cite{Pinar&2017} or~\cite{Automine} are reported to work only for $k \le 5$.
We note that combinatorial explosion also affects  approximation algorithms as long as one wants to estimate the counts of \emph{all} motifs at once, as we do.
Consider indeed $N_k$, the number of distinct graphlets on $k$ nodes (that is, the number of non-isomorphic connected simple $k$-node graphs).
Obviously, estimating all $k$-graphlet counts takes time $N_k$ since we might need to output one count for each graphlet.
One can show that $N_k$ grows extremely fast, as $N_k=\exp(\Omega(k^2))$; for example, $N_{20} > 10^{30}$, so already for $k=20$ the task is hopeless in practice.\footnote{See sequence A001349 in the OEIS, \url{https://oeis.org/A001349}.}
However, $N_8\simeq 11\,000$ and $N_{10} \simeq 12\,000\,000$, so for these values of $k$ the task could still be feasible even on large graphs such as real-world social networks.

With combinatorial algorithms ruled out, the most appealing approach left is sampling.
The idea is just to sample graphlet copies uniformly at random from $G$ and estimate their frequencies and counts consequently.
The difficulty lies in implementing the graphlet sampling primitive in an efficient way, which is trickier than it may appear.
The first general graphlet sampling technique was based on Markov chains and was introduced in~\cite{Bhuiyan&2012}.
The approach is elegant and works in principle for every $G$ and $k$, but in practice it is efficient only for $k=4,5$ on medium-sized graphs.
It was later proved that this approach needs $n^{\Omega(k)}$ steps in the worst case to produce just a single unbiased graphlet sample~\cite{Bressan&2017}, making it comparable to brute-force enumeration.
On the other hand,~\cite{Bressan&2017} showed that one can efficiently sample subgraphs using the color coding technique of Alon, Yuster and Zwick~\cite{Alon&1995}.
The idea of color coding is to assign to each node of $G$ one of $k$ colors independently and uniformly at random.
Then, via dynamic programming,  one can count the number of subtrees of $G$ that span $k$ distinct colors (we say they are \emph{colorful}) in time $O(E_G)$, which gives an estimate of the actual number of subtrees in $G$.
While~\cite{Alon&1995} used color coding for counting trees,~\cite{Bressan&2017} showed how to use it to implement graphlet sampling.
This sampling framework has two components.
The first component is an enriched version of the aforementioned dynamic programming, which builds an abstract ``urn'' containing a representative sub-population of all the trees of $G$ on up to $k$ nodes (henceforth ``$k$-treelets'').
The second component is a recursive algorithm that uses the urn to sample a random $k$-treelet from $G$ and, thus, obtain a random connected subset of $k$ nodes (that is, a motif).
One can thus estimate graphlet counts in two steps: the \emph{build-up phase}, where one builds the urn from $G$, and the \emph{sampling phase}, where one samples $k$-treelets from the urn.
The build-up phase takes time $O(a^k m)$ and space $O(a^k n)$ for some $a>0$, where $n$ and $m$ are the number of nodes and edges of $G$, while sampling takes a variable but typically small amount of time.
This algorithm, named CC by the authors (after \textit{c}olor \textit{c}oding), can reliably and accurately count motifs on $k>5$ nodes on medium-large graphs and is the current state of the art in motif counting~\cite{Bressan&2018b}.

While CC was the first algorithm consistently able to count motifs on more than $5$ nodes on large graphs, the hardness of the problem still imposed some limitations on it.
First, the build-up phase of CC is resource-demanding, especially concerning the memory usage, which as mentioned above, grows exponentially with $k$.
This limits the scalability; for example, even on a machine with 72GB of main memory, CC runs out of memory while estimating $7$-graphlet counts on graphs with more than $2$M nodes, as shown in our experiments.
Second, since CC samples graphlets uniformly at random, its approximation guarantees are only additive.
That is, using $s$ samples, CC can only detect graphlets whose relative frequency is at least $\nicefrac{1}{s}$ --- all other graphlets will be undetected or heavily misestimated. 
Since in many graphs nearly all graphlets have extremely low frequencies, CC would need to draw an unacceptable number of samples (this is true not only for CC but for any algorithm based on uniform sampling).
In this work we overcome these two bottlenecks, pushing the color-coding approach in the realm of massive graphs.

\subsection{Our results}
We present two motif counting algorithms.
The first is \motivo, which is designed to count motifs on $k \le 16$ nodes.
The second is \motiveight\ (pronounced ``leitmotif'', for \textit{L}arge-graph $8$-node \textit{M}otif counter), which is optimized for motifs on $k \le 8$ nodes.
In particular, \motiveight\ can count motifs on graphs with billions of edges with excellent accuracy, using just ordinary hardware.
To convey the idea, in one hour we can accurately count $8$-node motifs in a graph with $65$M nodes and $1.8$B edges (\fs); this is 200 and 2000 times larger, in terms of $n$ and $m$, than the prior art.\footnote{All measurements are obtained on a workstation with 36 cores and 72GB of main memory --- see Section~\ref{sec:exp}.}
Unlike all previous algorithms, \motivo\ and \motiveight\ compute accurate counts for nearly \emph{all} graphlets at once, even extremely rare ones.
Consider for instance the \yelp\ graph (one of our datasets).
On this graph, the state-of-the-art algorithm CC finds only the top two most frequent $8$-graphlets and literally misses all the others.
In the same amount of time, \motiveight\ produces counts within a multiplicative error of $\epsilon \le 0.25$ for $\simeq 10.000$ distinct graphlets simultaneously.
Many of these graphlets have frequency $\le 10^{-21}$, which means that any algorithm based on uniform sampling, like CC or random walks, would need $10^3$ years to find them even by sampling $10^6$ graphlets/second.
Figure~\ref{fig:scaling} gives a pictorial summary of these results.

\begin{figure}[t]
\includegraphics[height=130pt]{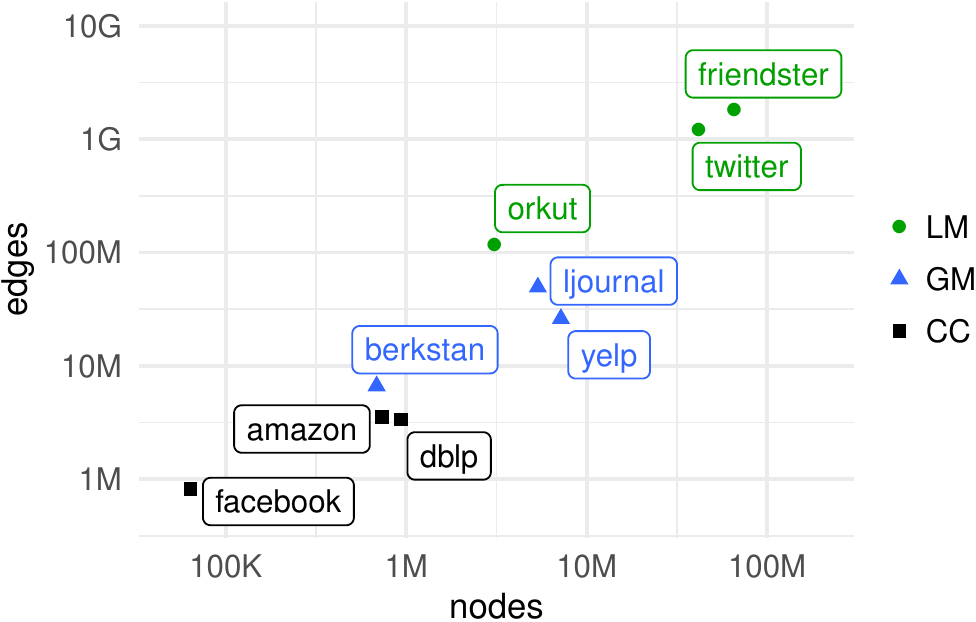}
\hfill
\includegraphics[height=130pt]{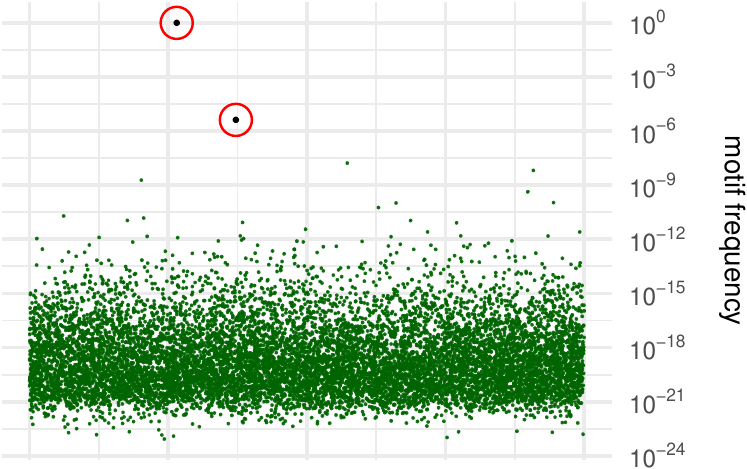}
\caption{Our performance in a picture for motifs on $k=8$ nodes.
Left: the size of the graphs managed by the state-of-the-art CC and by our algorithms \motivo\ and \motiveight\ (abbreviated as LM).
Right: the dense green band has one point for each distinct graphlet that we count accurately on the \yelp\ graph ($\simeq{10.000}$ in total) while the two red circles show the only two graphlets detected by CC.}
\label{fig:scaling}
\end{figure}
\textbf{Technical contributions}.
Our algorithms \motivo\ and \motiveight\ rely on two technical contributions of different nature.
The first contribution is a set of efficient algorithms and data structures that improve the running time and especially the space usage of the build-up phase (one of the main bottlenecks of CC).
In particular, for \motiveight\ we design data structures that use a nearly-optimal amount of bits in an information-theoretical sense, by adopting a compact integer representation of rooted colored treelets, and a variable-length encoding of the treelet counts.
In addition, we show how to entirely skip the heaviest round of the dynamic program via a balanced treelet decomposition trick, saving additional time and space.
Thanks to these ingredients we can scale the build-up phase from millions to billions of edges and from $k=5$ to $k=8$, although from an asymptotic point of view we are still subject to the $O(a^k m)$ running time and $O(c^k n)$ space usage bounds of CC.
Our other algorithm, \motivo, can be used to handle graphlets on up to $k = 16$ nodes (with some loss of efficiency compared to \motiveight).

The second technical contribution, common to both \motivo\ and \motiveight, is for the sampling phase and is of a fundamentally different nature.
To convey the idea, imagine having an urn with 1000 balls --- 990 red, 9 green, and 1 blue.
With uniform sampling, we will quickly obtain a good estimate of the fraction of red balls, but we will need many samples to observe green or blue balls.
This is the uniform sampling barrier mentioned above.
Now imagine to remove ${\sim}99\%$ of all red balls from the urn.
We would be left with 10 red, 9 green, and 1 blue ball, and we would then quickly get a good estimate of the fraction of green balls.
Then imagine deleting ${\sim}99\%$ of the red and green balls; after this, we could quickly estimate the fraction of blue balls.
We show that our treelet database supports a similar ``deletion trick'': we can ignore some treelets (say, stars) and focus on other ones (say, paths).
In this way we can estimate the most frequent graphlet, delete it and proceed to the second most frequent one, and so on.
We name this algorithm Adaptive Graphlet Sampling, or AGS.
Technically speaking, AGS is based on an online greedy algorithm for a fractional set cover problem (we want to ``cover'' all graphlets with their spanning treelets).
We provide theoretical guarantees on the accuracy and efficiency of AGS via competitive analysis and martingale concentration bounds.
We note that AGS is the first algorithm to ensure accuracy for all graphlets at once.

\textbf{Remark.}
We first described \motivo\ in~\cite{Bressan&2019VLDB}, a preliminary version of this work.
The main contribution of this extended version is \motiveight\ and all the associated technical ideas (integer treelet encoding, round skipping, and variable-length counts).
Thanks to these ideas we can reduce the running time and space usage of \motivo\ by almost an order of magnitude, which allows us to scale $k=7$ to $k=8$ on the largest graphs.
Recall that the time and space requirements grow exponentially with $k$, hence, scaling from $k$ to $(k+1)$ is a considerable challenge.
To give an idea, to count $6$-node motifs on our largest graph \fs\ ($\sim 2$B nodes), \motivo\ needs 130 GB of space and 3 hours of time, while \motiveight\ needs only 25GB and 30 minutes.
The source code of both \motivo\ and \motiveight\ is publicly available at \url{https://gitlab.com/steven3k/motivo}.

\paragraph{Manuscript organization.}
Subsections~\ref{sub:rel} and~\ref{sub:prelim} respectively discuss related work and introduce notation and conventions.
Section~\ref{sec:cc} reviews color coding and CC, the starting points of our work.
Section~\ref{sec:speed} describes the build-up phase of our algorithms, comparing them to CC.
Section~\ref{sec:sample} describes our adaptive graphlet sampling technique and other graphlet sampling optimizations.
Finally, Section~\ref{sec:exp} provides experimental results on several public datasets.

\subsection{Related work}
\label{sub:rel}
The traditional approach to subgraph counting is based on combinatorial counting.
This approach works for $k=4$ and $5$, but becomes unusable for $k>5$, even for approximate counting.
This limitation affects many motif mining tools such as Arabesque~\cite{Arabesque}, ESCAPE~\cite{Pinar&2017}, Fractal~\cite{Fractal}, Pangolin~\cite{Pangolin}, AutoMine~\cite{Automine}, and DISC~\cite{Zhang20distributed}.
None of these works reports results for $k>5$.
The only algorithms that can scale to $k>5$ are based on random walks or color coding.

The random walk approach works as follows.
We define a graph $\mathcal{G}_k$ whose nodes are the graphlets of $G$ and where two graphlets are adjacent if they share $k-1$ nodes.
Then, the lazy random walk on $\mathcal{G}_k$ can be shown to be ergodic and converge towards a unique limit distribution on the set of all graphlets of $G$.
Moreover, each step of the walk can be efficiently simulated using only local information from $G$.
Thus one just simulates enough steps, draw a graphlet from the limit distribution, and compute an unbiased estimator of its frequency~\cite{Bhuiyan&2012,Wang&2014,Chen&2016,Han&2016}.
Unfortunately, the random walk may need $\Omega(n^{k-1})$ steps to reach the limit distribution, or even just to find the most frequent graphlet of $G$ when $G$ is fast-mixing~\cite{Bressan&2017,Bressan&2018b}.
The running time of this approach is therefore close to $O(n^{k})$, which is the running time of the naive brute-force enumeration.
One can mitigate this mixing time explosion by walking on subgraphs on less than $k$ nodes~\cite{Chen&2016} or by sampling a spanning tree directly in $G$~\cite{Paramonov2019}.
Unfortunately, this gives biased samples that needs to be reweighted, increasing the estimator variance so that $\Omega(n^{k-1})$ samples become necessary again.
In addition, most of these algorithms can estimate only the relative \emph{frequencies} of graphlets, but not their counts.

There are several algorithms based on color coding.
Most of them can only count special motifs, such as trees and subgraphs of small treewidth.
Moreover, often they count only non-induced copies~\cite{Hffner2007AlgorithmEF,Alon&2008,Zhao&2010,Zhao&hadoop,Slota&2013,Finocchi&2015MapReduce,Chakaravarthy&2016}.
Here instead we want to count the \emph{induced} copies of \emph{all} graphlets at once.
The only algorithm that can do so is the CC algorithm of~\cite{Bressan&2017,Bressan&2018b}.
The only two limitations of CC are that (1) for $k=8$, it can only manage graphs on less than $0.5$M nodes, and (2) it gives accurate counts only for the few most frequent graphlets.
In contrast, thanks to our optimization of the build-up phase and the introduction of adaptive sampling, we manage graphs with tens of millions of nodes and billions of edges while giving guarantees for nearly all $k$-graphlets at once.
We note that specialized algorithms can efficiently compute or estimate the number of cliques on graphs with tens or hundreds of millions of edges \cite{Automine, Jain&2017, DBLP:conf/wsdm/0003S20}.
Again, these algorithms work only for a particular motif (the clique) while we can count all motifs at once.

\subsection{Preliminaries and notation.}
\label{sub:prelim}
We denote the host graph by $G=(V,E)$, and we let $n=|V|$ and $m=|E|$.
A \emph{graphlet} is a connected graph $H=(V_H,E_H)$.
An \emph{occurrence} or \emph{copy} of $H$ in $G$ is a subgraph of $G$ isomorphic to $H$.
Unless otherwise specified, a copy of $H$ in $G$ is meant as induced.
We let $k=|V_H|$; in this work we consider $k \le 16$, and we pay particular attention to $k \le 8$.
A \emph{treelet} $T$ is a graphlet that is a tree.
When using treelets as spanning trees, their copies in $G$ are meant as not necessarily induced.
We denote by $\mathcal{H}=\mathcal{H}_k$ the set of all $k$-node graphlets, i.e., all non-isomorphic connected graphs on $k$ nodes.
When needed we denote by $H_i$ the $i$-th graphlet of $\mathcal{H}$.
A colored graphlet has a color $c_u \in [k]$ associated to each one of its nodes $u$.
With a mild abuse of notation, in this paper we use $[k]$ to denote the set $\{0,\ldots,k-1\}$.
A graphlet is \emph{colorful} if its nodes have pairwise distinct colors.
We denote by $C \subseteq [k]$ a subset of colors.
We denote by $(T,C)$ or $\tc$ a colored treelet whose nodes span the set of colors $C$; we only consider colorful treelets, i.e., the case $|T|$=$|C|$.
We often consider treelets and colored treelets rooted at a node $r \in T$ (different rootings can give different treelets).
Finally, by $d_v$ we denote the degree of $v$ in $G$, and by $u \sim v$ we indicate that $u$ is a neighbor of $v$ in $G$.

\section{Color coding and the CC algorithm}
\label{sec:cc}
\label{sub:cc}
Our algorithms, like the CC algorithm of~\cite{Bressan&2017,Bressan&2018b}, are based on the color coding technique by Alon, Yuster and Zwick~\cite{Alon&1995}, which works as follows.
First, we assign to each node $v \in G$ a color chosen uniformly and independently in $[k]$.
Now consider any $k$-node treelet in $G$: the random coloring makes it colorful with probability $p_k=\frac{k!}{k^k} \approx e^{-k}$.
Therefore a constant fraction of all treelets of $G$ will become colorful.
The main idea of color coding is that the number of colorful $k$-treelets can be counted in time $O(m \cdot a^k)$ with a bottom-up dynamic program.
The running time is exponential in $k$, but linear in $m$.

Color coding was introduced to detect and count noninduced trees.
Then, the authors of~\cite{Bressan&2017,Bressan&2018b} showed how to extend it for sampling graphlets.
The idea is to run a modified dynamic program that collects information about the ``colorful structure'' of $G$.
Once this is done, it is easy to sample colorful $k$-treelets from $G$ and, so, to sample colorful graphlets (just take the graphlet spanned by the treelet).
This is the essence of the CC algorithm~\cite{Bressan&2017,Bressan&2018b}.
We now detail the two phases of CC, the \emph{build-up phase} and the \emph{sampling phase}, which will also be the same phases used in our algorithms.

\subsection{The build-up phase}
\label{sub:build}
The goal of this phase is to build a \emph{count table} holding the counts of colorful treelets of $G$.
The phase starts by coloring $G$: for each node $v$, we draw a color $c_v$ uniformly at random in $[k]$.
Now, for every $v \in G$ and every rooted colored treelet $\tc$ on up to $k$ nodes, we want to compute the following quantity:
\begin{align}
c(\tc,v) = \text{the number of copies of } \tc \text{ in } G \text{ that are rooted in } v
\end{align}
We compute $c(\tc,v)$ by dynamic programming.
For each $v$ we initialize $c(\tc, v)=1$, where $T$ is the trivial treelet on $1$ node and $C = \{c_v\}$; all other counts are implicitly $0$.
Now suppose we have computed the counts of all treelets on $h-1$ nodes for some $h \le k$.
To compute $c(\tc, v)$ for some $\tc=(T,C)$ on $h$ nodes, we decompose $T$ in two smaller subtrees $\ta$ and $\tb$, rooted respectively at the root $r$ of $T$ and at a child of $r$, and combine their counts.
It is easy to see that $c(\tc, v)$ is given by (see~\cite{Bressan&2018b}):
\begin{equation}
\label{eqn:decomp}
c(\tc, v) = \frac{1}{\beta_T} \sum_{u \sim v} \sum_{\substack{C' \subset C \\ |C'|=|\ta|}}\!\!\!\! c(\tca, v) \cdot c(\tcb, u)
\end{equation}
where $\beta_T$ is the number of subtrees of $T$ isomorphic to $\tb$ rooted in a child of $r$.
In practice, one uses a \emph{canonical decomposition} which defines the pair $(\ta,\tb)$ uniquely as a function of $T$.
A simple analysis of the entire dynamic program shows that:
\begin{theorem}[\cite{Bressan&2018b}, Theorem 5.1]
The build-up phase of CC takes time $O(a^k m)$ and space $O(a^k n)$, for some constant $a>0$. 
\end{theorem}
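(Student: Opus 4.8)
This is purely a running-time and space analysis of the dynamic program described above, so the plan is: (i) count the distinct entries of the count table; (ii) bound the cost of evaluating the recurrence~\eqref{eqn:decomp} once; and (iii) multiply out, checking that the remaining bookkeeping (coloring, initialization, enumerating tree shapes and their canonical decompositions) is of lower order.

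First I would pin down what indexes an entry. The value $c(T_C,v)$ is determined by the node $v$, the \emph{unlabeled} rooted tree shape $T$ on $h \le k$ nodes, and the \emph{set} $C \subseteq [k]$ with $|C| = h$; the assignment of individual colors to the nodes of $T$ is irrelevant, since $c(T_C,v)$ only counts copies of $T$ rooted at $v$ whose node colors form the set $C$. The number of rooted tree shapes on $h$ nodes is at most $4^h$, since each is encoded by a length-$2h$ balanced-parenthesis string read off a depth-first traversal (one could invoke Otter's sharper $O(2.96^h)$ estimate, but it is not needed). Multiplying by the $\binom{k}{h} \le 2^k$ choices of $C$ and summing over $h \le k$, the table has $O(8^k)$ entries per node, i.e.\ $O(8^k n)$ in total; as each count is at most $n^{k}$ and hence fits in $O(k\log n)$ bits, this is $O(a^k n)$ machine words in the word-RAM model (plus $O(n+m)$ for the graph and the coloring, charged to the input).

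Then I would bound the time. Coloring $G$ and initializing the one-node treelets is $O(n)$. In the step for $h$, I enumerate the $O(4^h)$ rooted shapes $T$ on $h$ nodes together with their canonical decomposition $(T',T'')$ (a one-time cost polynomial in the number of shapes and in $k$, hence absorbed); then, for each of the $\binom{k}{h}$ color sets $C$ and each node $v$, I evaluate~\eqref{eqn:decomp}. That evaluation runs over the $d_v$ neighbors $u\sim v$ and the $\binom{|C|}{|T'|} \le 2^{|C|} \le 2^k$ subsets $C'\subset C$, doing $O(1)$ table lookups and arithmetic per term, so it costs $O(d_v\,2^k)$. Summing over $v$ replaces $\sum_v d_v = 2m$ by a factor $O(m)$; summing over the $O(8^k)$ shape/color pairs and the extra $2^k$ gives total time $O(16^k m)$, which is $O(a^k m)$. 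Taking $a$ to be the larger of the two constants obtained yields the single $a$ in the statement.

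The only non-routine part is the combinatorial accounting in (i)--(ii): one must notice that the table is indexed by (shape, color \emph{set}) rather than by a fully colored treelet, and that the inner sum over $C'$ in~\eqref{eqn:decomp} contributes only a $2^{|C|}$ and not, say, a $k!$ factor --- this is exactly what keeps $a$ constant. The rest (unit-cost arithmetic on counts, negligibility of enumerating shapes and canonical decompositions, and $m \ge n$ so that $O(a^k(n+m)) = O(a^k m)$) is a direct check.
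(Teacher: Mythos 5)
Your reconstruction is correct; note, though, that this theorem is \emph{cited} from~\cite{Bressan&2018b} rather than proved in the present paper, so there is no in-text proof to compare against. Your accounting matches the standard analysis: the table is indexed by (rooted shape, color \emph{set}, node), rooted shapes on $h$ nodes are $O(4^h)$, color sets contribute $\binom{k}{h}$, and each evaluation of~\eqref{eqn:decomp} scans $d_v$ neighbors and at most $2^k$ subsets $C'$, so $\sum_v d_v = 2m$ gives the $O(a^k m)$ bound. One small nit: ``unit-cost arithmetic on counts'' is not literally true in the word-RAM, since counts can be $\Theta(k\log n)$ bits; but arithmetic on such numbers costs $\operatorname{poly}(k)$ word operations, which is absorbed into $a^k$, so the conclusion is unaffected.
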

In practice, the table size grows quickly.
For $k=6$ on a graph with 5M nodes, CC needs $50$GB of memory~\cite{Bressan&2018b}.

\subsection{The sampling phase}
\label{sub:sampling}
The goal of this phase is to estimate the graphlet counts by sampling graphlets from $G$.
We do this by sampling colorful treelet copies from the treelet count table, as follows.
First, draw a pair $(\tc, v)$ with probability proportional to $c(\tc,v)$.
This is possible since we know all the counts $c(\tc,v)$.
Now we want to sample a copy of $\tc = (T,C)$ rooted at $v$.
To this end, we take again the canonical decomposition of $T$ into $T'$ and $T''$.
We then sample a pair $(u, C'')$, where $u \sim v$ and $C'' \subset C$ contains $|T''|$ colors, with probability proportional to $\beta_T^{-1} c( (T', C \setminus C''), v) \cdot c( (T'', C'') , u)$. 
We then recursively sample a copy of $\tca = (T', C \setminus C'')$ rooted at $v$, and a copy of $\tcb = (T'', C'')$ rooted at $u$. Once we have the copies of $\tca$ and $\tcb$, we just combine them into a copy of $\tc$.
One can verify that the resulting copy is drawn uniformly at random from the set of all colorful treelets of $G$~\cite{Bressan&2018b}.

Using this $k$-treelet sampling primitive, one can estimate the copies of any given $k$-graphlet $H_i$ (e.g., the clique).
First, we estimate the number $c_i$ of \emph{colorful} copies of $H_i$.
To achieve this, we sample a treelet copy as described above.
This treelet copy necessarily spans some induced $k$-node subgraph $\glet$ of $G$.
Let $\chi_i$ be the indicator random variable of the event that $\glet$ is a copy of $H_i$.
It is easy to see that $\E[\chi_i] = c_i \sigma_i / t$, where $\sigma_i$ is the number of spanning trees in $H_i$ and  $t$ is the total number of colorful $k$-treelets of $G$.
Now, $t$ is known from the treelet count table, and $\sigma_i$ can be computed quickly via Kirchhoff's theorem (see below).
Therefore we can compute $\hat{c}_i = t\, \sigma_i^{-1} \chi_i$, which is an unbiased estimator of $c_i$.
By standard concentration bounds, $\hat{c}_i$ concentrates around $c_i$ if enough samples are taken.

Finally, to estimate the total number $g_i$ of copies of $H_i$, we simply divide $\hat{c}_i$ by the probability $p_k = k! / k^k$ that a fixed set of $k$ nodes in $G$ becomes colorful.
Indeed, if $G$ contains $g_i$ copies of $H_i$, then by linearity of expectation the expected number of copies of $H_i$ that become colorful is $\E[c_i] = p_k g_i$.
Therefore $\hat{g}_i = \hat{c}_i/p_k$ is an unbiased estimator for $g_i$.

\subsection{Statistical guarantees of the estimates}
The crucial point of the algorithm just described is the accuracy of the graphlet estimates, $\hat{g}_i$.
There are two sources of error: the coloring, which distorts the true graphlet distribution into the colorful one, and the sampling.

Regarding the coloring error, one can prove that the colorful graphlet distribution is statistically close to the actual graphlet distribution of $G$.
First, we report a bound from~\cite{Bressan&2018b}, slightly rephrased.
Let $g = \sum_i g_i$ be the total number of induced $k$-graphlet copies in $G$.
Then:
\begin{theorem}[\cite{Bressan&2018b} Thm 5.3]
\label{thm:color_conc}
For all $\epsilon>0$, a random coloring of $G$ with $k$ colors gives:
\begin{align}
    \prob\!\left[\left|\frac{c_i}{p_k} - g_i\right| > \frac{2 \epsilon g }{1-\epsilon}\right] = \exp\left(-\Omega\left( \epsilon^2 g^{1/k} \right)\right).
\end{align}
\end{theorem}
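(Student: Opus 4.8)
Fix the graphlet $H_i$ and write $c_i=\sum_{S}\mathbbm{1}[S\text{ is colorful}]$, the sum ranging over the $g_i$ copies of $H_i$ in $G$. This is a function of the $n$ independent vertex colors, and $\E[c_i]=p_k g_i$ by linearity. The statement concerns a single $i$, and there are only $O_k(1)$ graphlet types, so a union bound would cost at most a constant factor; moreover the forbidden deviation $\tfrac{2\epsilon g}{1-\epsilon}$ is measured against the \emph{total} count $g\ge g_i$. Hence it suffices to show $\prob[\,|c_i-p_k g_i|>2\epsilon p_k g\,]=\exp(-\Omega(\epsilon^2 g^{1/k}))$; dividing through by $p_k$ (a constant once $k$ is fixed) and keeping the harmless slack $\tfrac{1}{1-\epsilon}$ then recovers the claimed inequality. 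The route will be a Doob martingale that reveals the vertex colors one at a time in an order $v_1,\dots,v_n$ with high-degree vertices first, setting $Z_t=\E[c_i\mid c_{v_1},\dots,c_{v_t}]$, so that $Z_0=p_k g_i$ and $Z_n=c_i$.

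The heart of the matter is a bound on the increments $|Z_t-Z_{t-1}|$. Splitting $c_i$ into copies through $v_t$ and copies avoiding $v_t$, only the former depend on $c_{v_t}$, so $|Z_t-Z_{t-1}|\le\operatorname{osc}_{\ell}\sum_{S\ni v_t}\prob\!\big[S\text{ colorful}\mid c_{v_1},\dots,c_{v_{t-1}},\,c_{v_t}=\ell\big]$, the oscillation being over the possible colors $\ell$ of $v_t$. Writing $j_S(t)$ for the number of vertices of $S$ among $\{v_1,\dots,v_t\}$, each summand is either $0$ (if the already-revealed colors of $S$ collide) or equals the survival probability $\tfrac{(k-j_S(t))!}{k^{\,k-j_S(t)}}$ of the still-unrevealed vertices of $S$. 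The pivotal observation is the identity $\tfrac{(k-1)!}{k^{k-1}}=\tfrac{k!}{k^k}=p_k$: a copy whose \emph{first} revealed vertex is $v_t$ contributes $p_k$ for every choice of $c_{v_t}$, hence contributes nothing to the oscillation. Therefore $|Z_t-Z_{t-1}|\le D_t:=\sum_{S\ni v_t,\ j_S(t)\ge 2}\tfrac{(k-j_S(t))!}{k^{\,k-j_S(t)}}$. Double counting over the ``$j$-th revealed vertex of a copy'' (well defined for each $j$ and each copy) gives $\sum_tD_t=g_i\sum_{j=2}^{k}\tfrac{(k-j)!}{k^{k-j}}=O_k(g_i)=O_k(g)$. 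The crucial estimate is $\max_t D_t=O_k\!\big(g^{\,1-1/k}\big)$: this is where the high-degree-first order enters, via the combinatorial fact that in any graph with $g$ connected $k$-subsets the copies through a vertex $v$ that also contain a vertex of degree at least $\deg v$ number only $O_k(g^{\,1-1/k})$ --- the extremal instance being a clique on $\Theta(g^{1/k})$ vertices, which is precisely the source of the exponent $g^{1/k}$.

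Granting these two estimates, $\sum_tD_t^2\le(\max_tD_t)(\sum_tD_t)=O_k\!\big(g^{\,1-1/k}\big)\cdot O_k(g)=O_k\!\big(g^{\,2-1/k}\big)$, so the Azuma--Hoeffding inequality gives $\prob[\,|c_i-p_k g_i|>\lambda\,]\le 2\exp\!\big(-\Omega_k(\lambda^2/g^{\,2-1/k})\big)$; substituting $\lambda=2\epsilon p_k g$ yields $2\exp(-\Omega_k(\epsilon^2 g^{1/k}))=\exp(-\Omega(\epsilon^2 g^{1/k}))$, as desired. The step I expect to be the main obstacle is the quadratic-variation bound: the naive increment bound $|Z_t-Z_{t-1}|\le\#\{\text{copies through }v_t\}$ is perfectly fine on ``spread-out'' instances but is hopeless on concentrated ones such as stars, where one genuinely needs \emph{both} the vanishing contribution of a copy's first-revealed vertex \emph{and} the high-degree-first ordering to drive $D_t$ back down to $O_k(g^{\,1-1/k})$. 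Isolating this as a clean combinatorial lemma --- informally, ``the clique is the worst case'' --- is the real work. (A Kim--Vu-style analysis of $c_i$ as a degree-$k$ multilinear polynomial in the indicators $\mathbbm{1}[c_v=\ell]$ is a plausible alternative, but extracting the sharp $\epsilon^2 g^{1/k}$ rate from it calls for comparable care with the partial-derivative estimates.)
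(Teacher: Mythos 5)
The paper does not prove this theorem: it is quoted verbatim from \cite{Bressan&2018b} (Theorem 5.3), so there is no proof in the present manuscript to compare yours against. I can therefore only assess the internal soundness of your proposal.

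Your overall scaffolding is sensible. The reduction to bounding $\prob[|c_i - p_k g_i| > 2\epsilon p_k g]$ is fine since $g_i \le g$ and $\tfrac{1}{1-\epsilon} \ge 1$. The Doob martingale over a fixed vertex-reveal order, the identification $Z_0 = p_k g_i$, $Z_n = c_i$, and the key algebraic observation that $\tfrac{(k-1)!}{k^{k-1}} = \tfrac{k!}{k^k} = p_k$ (so a copy whose first-revealed vertex is $v_t$ has survival probability $p_k$ for every choice of $c_{v_t}$ and contributes nothing to the oscillation) are all correct, and the double-counting $\sum_t D_t = g_i\sum_{j=2}^{k}\tfrac{(k-j)!}{k^{k-j}} = O_k(g_i)$ is valid. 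Given $\max_t D_t = O_k(g^{1-1/k})$, the step $\sum_t D_t^2 \le (\max_t D_t)(\sum_t D_t) = O_k(g^{2-1/k})$ and the Azuma computation with $\lambda = 2\epsilon p_k g$ do yield $\exp(-\Omega_k(\epsilon^2 g^{1/k}))$ as required, and the $D_t$ are deterministic (they depend only on the graph and the reveal order), so Azuma applies cleanly.

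The genuine gap, which you yourself flag, is the combinatorial lemma $\max_t D_t = O_k(g^{1-1/k})$ under the high-degree-first ordering, i.e., that for any vertex $v$ the number of $k$-graphlet copies through $v$ that also contain a vertex of degree at least $\deg v$ is $O_k(g^{1-1/k})$. This is the entire content of the theorem: without it the bound degrades to $\exp(-\Omega(\epsilon^2))$, which is useless, and the natural worst-case bound $D_t \le L_{v_t}$ (copies through $v_t$) can be as large as $\Theta(g)$ on star-like instances. Your ``clique is the worst case'' heuristic is consistent with every example I tried (cliques, stars, books, complete bipartite graphs, sunflowers), but it does not follow from the double-counting identity $\sum_t D_t = (k-1)g_i$ alone --- that only gives the average, not the max. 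You would need an argument that simultaneously exploits (a) the ordering, so that earlier-revealed vertices have degree at least that of $v_t$, and (b) a degree-vs-copy-count tradeoff to rule out a low-degree vertex sitting in $\omega(g^{1-1/k})$ copies all of which pass through a higher-degree neighbour. As written, the proof is incomplete until this lemma is stated precisely and proved; I cannot certify it as correct on the strength of the heuristic alone.

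One further caution: whether it is degree or the per-vertex copy count $L_v$ that should drive the reveal order is not obvious a priori; if you attempt the lemma, be prepared for the ``right'' ordering to be by $L_v$ rather than by $\deg v$, since the quantity you are ultimately trying to control is $D_t \le L_{v_t}$.
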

Note that the bound above is additive, that is, it establishes a deviation proportional to $g$, the total number of graphlets.
In this work we complement it with a multiplicative bound that holds when the maximum degree $\Delta$ of $G$ is small.
\begin{theorem}
\label{THM:CONC_DEP}
For all $\epsilon > 0$, a random coloring of $G$ with $k$ colors gives:
\begin{align}
\prob\!\left[\left|\frac{c_i}{p_k} - g_i \right| > \epsilon \,g_i \right] < 2 \exp\!\left(\!-\frac{2\epsilon^2 p_k^2 \, g_i }{(k-1)!\Delta^{k-2}}\right).
\end{align}
\end{theorem}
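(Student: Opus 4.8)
The plan is to prove this by a bounded-differences (Azuma--McDiarmid) argument applied to $c_i$ viewed as a function of the independent random colours $(c_v)_{v\in V}$. First I would set up the reduction to a deviation bound. Enumerate the copies of $H_i$ in $G$ as $R_1,\dots,R_{g_i}$, where each $R_j$ is a set of $k$ vertices inducing a copy of $H_i$, and let $X_j=\mathbbm{1}[R_j\text{ is colourful}]$, so that $c_i=\sum_{j=1}^{g_i}X_j$. Since the colouring is uniform and independent, any fixed set of $k$ vertices is colourful with probability exactly $p_k=k!/k^k$, hence $\E[X_j]=p_k$ and $\E[c_i]=p_k g_i$. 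Thus the claimed inequality is exactly the two-sided deviation bound $\prob[\,|c_i-p_k g_i|>\epsilon\,p_k g_i\,]<2\exp(\cdots)$, and this is what I would establish.

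Next, the concentration itself. The $X_j$ are not independent, but $c_i$ is a deterministic function of the $n$ independent colours. Re-sampling a single colour $c_v$ can change the colourfulness status only of copies that contain $v$, so it changes $c_i$ by at most $D_v$, the number of copies of $H_i$ through $v$; equivalently, $c_i$ has bounded differences with constants $c_v=D_v$. McDiarmid's inequality then gives $\prob[\,|c_i-\E c_i|>t\,]\le 2\exp(-2t^2/\sum_v D_v^2)$, and with $t=\epsilon p_k g_i$ the theorem follows once I show $\sum_v D_v^2 \le (k-1)!\,\Delta^{k-2}g_i$. Here I would use $\sum_v D_v = k g_i$ (each copy is counted once per vertex), so that $\sum_v D_v^2\le (\max_v D_v)\,k g_i$, and then bound the influence combinatorially: since $H_i$ is connected it has a spanning tree, and any copy of $H_i$ through a fixed edge of $G$ can be ``grown'' one vertex at a time along that spanning tree, contributing a factor $\le\Delta$ for each of the remaining $k-2$ vertices, times $O(k^2)$ ways to fix the anchor edge. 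This yields $O(k^2\Delta^{k-2})$ copies through any fixed edge of $G$, which is the $(k-1)!\Delta^{k-2}$ factor in the statement (up to the dependence on $k$), and each of the $g_i$ copies can be charged to one of its edges.

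The step I expect to be the main obstacle is reconciling this edge-based count with the vertex-based influence used by McDiarmid: growing a copy out of a fixed \emph{vertex} only gives $\max_v D_v=O(k\Delta^{k-1})$, one factor of $\Delta$ too many. To recover the stated $\Delta^{k-2}$ I would instead expose the colours one vertex at a time in a connected order (each newly exposed vertex having an already-exposed neighbour) and run a Doob martingale $Z_0=\E c_i,\dots,Z_n=c_i$. By symmetry of the colouring, $\prob[R_j\text{ colourful}\mid c_v=a]=p_k$ for any single vertex $v\in R_j$ and colour $a$, so the first-exposed vertex of each copy contributes nothing to the corresponding martingale increment; hence the increment at a vertex $v$ is governed only by the copies of $H_i$ through $v$ that also meet the already-exposed part, which one then controls by relating them to copies through an edge at $v$. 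Carrying this out carefully---fixing the exposure order, making precise the charging of each copy to its edges, and keeping the constants---is where the real work lies, and it is also where the hypothesis that $\Delta$ is small is used, since for large $\Delta$ a high-degree vertex (such as the centre of a star copy) has influence $\Theta(\Delta^{k-1})$ and the bound genuinely degrades. Finally, plugging $\sum_v D_v^2\le (k-1)!\,\Delta^{k-2}g_i$ and $t=\epsilon p_k g_i$ into McDiarmid's inequality yields the claimed $2\exp(-2\epsilon^2 p_k^2 g_i/((k-1)!\Delta^{k-2}))$.
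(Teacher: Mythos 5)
Your plan takes a genuinely different route from the paper. The paper does not go through bounded differences at all: it treats $c_i=\sum_{h}X_h$ directly as a sum of dependent Bernoulli indicators and invokes a ready-made concentration bound for such sums (Theorem 3.2 of Dubhashi--Panconesi), parameterized by the fractional chromatic number of the dependency graph on the copies of $H_i$, where two copies are adjacent iff they share at least two vertices. The quantity $(k-1)!\,\Delta^{k-2}$ then arises as a bound on $1+\max_{u,v}g(u,v)$ (copies through a fixed pair of vertices), which controls that chromatic number; no exposure order and no per-vertex influence ever enter the argument. Viewing things at the level of the dependency graph is exactly what lets the paper charge each copy to a \emph{pair} of vertices rather than a single one, which is the clean way to get $\Delta^{k-2}$ instead of $\Delta^{k-1}$.

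Your version has a real gap precisely at the point you flag. The conditional-independence observation is correct: if $v_t$ is the first vertex of $R_j$ to be exposed, then $\E[X_j\mid c_{v_t}=a]=(k-1)!/k^{k-1}=p_k$ for every $a$, so such copies contribute nothing to the increment. But the remaining contribution at step $t$ is the number of copies $R_j\ni v_t$ that meet the already-exposed set \emph{anywhere}, and the exposed vertex $u\in R_j$ need not be a $G$-neighbour of $v_t$: a copy of $H_i$ has diameter up to $k-1$, so $u$ can be far from $v_t$ while the intermediate vertices of $R_j$ are still unexposed. Hence the step "control them by relating to copies through an edge at $v$" does not go through, and no choice of exposure order obviously gives $\max_t D_t^{\mathrm{eff}}=O(\Delta^{k-2})$ or $\sum_t (D_t^{\mathrm{eff}})^2\le (k-1)!\,\Delta^{k-2}g_i$ with the stated constants. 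You also correctly note that plain McDiarmid with $c_v=D_v$ is off by a factor $\Delta$ because a hub can have $D_v=\Theta(\Delta^{k-1})$, so this is not a cosmetic issue. As written, the crucial combinatorial bound is left as "the real work", whereas in the paper it is a one-line application of the dependency-graph theorem once $\max_{u,v}g(u,v)\le (k-1)!\,\Delta^{k-2}-1$ is observed. If you want to salvage a martingale proof, the natural move is to stop trying to pick a good exposure order and instead mirror the paper's pair-based charge: bound the variance proxy directly via $\sum_j\sum_{j':|R_j\cap R_{j'}|\ge 2}1$, which is exactly what the Dubhashi--Panconesi chromatic-number bound packages for you.
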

\begin{proof}
We use a concentration bound for dependent random variables from~\cite{Dubhashi&2009}.
Let $\vk_i$ be the set of copies of $H_i$ in $G$.
For any $h \in \vk_i$ let $X_{h}$  be the indicator random variable of the event that $h$ becomes colorful.
Let $c_i = \sum_{h \in \vk_i} X_h$; clearly $\E[c_i] = p_k g_i$.
Note that, for any $h_1,h_2 \in \vk_i$, the random variables $X_{h_1}, X_{h_2}$ are independent if and only if $|V(h_1) \cap V(h_2)| \le 1$, which means $h_1,h_2$ share at most one node.
For any $u,v \in G$ let then $g(u,v) = |\{h \in \vk_i : u,v \in h \}|$, and define $\chi_k = 1 + \max_{u,v \in G} g(u,v)$.
By a standard counting argument $\max_{u,v \in G} g(u,v) \le (k-1)!\Delta^{k-2}-1$ and thus $\chi_k \le (k-1)!\Delta^{k-2}$.
The bound then follows immediately from Theorem~3.2 of~\cite{Dubhashi&2009} by setting $t= \epsilon \E[c_i] = \epsilon p_k g_i$, $(b_{\alpha} - a_{\alpha}) = 1$ for all $\alpha = h \in \vk_i$, and $\chi^*(\Gamma) \le \chi_k \le (k-1)!\Delta^{k-2}$.
\end{proof}
These two bounds suggest that the random coloring does not introduce a significant distortion.
This is confirmed by our experiments, where the $\hat{g}_i$ appear concentrated around the mean.
Hence, one may avoid averaging over $\Theta(\exp(k))$ independent colorings, as suggested in the original color coding paper~\cite{Alon&1995}; one run is enough.
Thus, in a sense, the treelet count table is w.h.p.\ a database that holds (implicitly) a representative sample of all $k$-graphlets in $G$.

For what concerns the sampling error, standard concentration bounds apply to the error of uniform sampling (see above).
For AGS the analysis is more complex, and is shown in Section~\ref{sec:ags}.

\section{Fast construction of a compact treelet database}
\label{sec:speed}
This section details the internals of the build-up phase of \motivo\ and \motiveight.
Recall that the goal of this phase is to compute a compact database of colorful treelet counts (Section~\ref{sec:cc}).
This database will support the following operations:
\begin{itemize}
\item \texttt{occ($v$)}: get the total number of colorful treelet copies rooted at $v$
\item \texttt{sample($v$)}: get a uniform random colored treelet rooted at $v$
\end{itemize}
Using these two operations, we can implement graphlet sampling, by first selecting a node $v$ with probability proportional to the number of treelets rooted in it, and then sampling one such treelet uniformly at random.
For our adaptive sampling scheme AGS, we will also support the following operations:
\begin{itemize}
\item \texttt{occ($\tc, v$)}: get the total number of copies of $\tc$ rooted at $v$
\item \texttt{sample($T,v$)}: get a uniform random colored treelet rooted at $v$ with shape $T$
\end{itemize}

To give a detailed account of \motivo\ and \motiveight, we describe them incrementally. We start in Section~\ref{sub:treelets} by discussing the implementation of CC's build-up phase.
Then, in Section~\ref{sec:k16} (\motivo) and Section~\ref{sec:k8} (\motiveight) we progressively replace the algorithms and data structures of CC with ours, measuring the cumulative performance impact.
To measure the performance impact for \motivo, we use as baseline a \Cpp\ version of CC (which is originally in Java), that we wrote by carefully porting all algorithms and data structures.
The baseline for \motiveight\ is \motivo\ itself.
Finally, in Subsection~\ref{sub:build_opt} and Subsection~\ref{sub:biased} we describe additional optimizations for both \motivo\ and \motiveight.

\subsection{Treelets and counts}
\label{sub:treelets}
In this subsection we discuss the crucial aspects of the build-up phase of CC.
This phase spends nearly all its time in manipulating rooted (un)colored treelets and reading/writing their counts, as described in Section~\ref{sec:cc}.
Therefore, to speed up the phase, we need to make the manipulation of these objects as efficient as possible.

The first computationally intensive task is merging the treelet counts.
Consider a single treelet count $c(\tc,v)$.
To compute $c(\tc,v)$, both CC and our algorithms process all the neighbors $u$ of $v$ as follows.
First, suppose we have defined a total order over all the $\tc$ (this order is described below).
For every pair of non-zero counts $c(T'_{C'}, v)$ and $c(T''_{C''}, u)$, check that $C' \cap C'' = \emptyset$, and that $T''_{C''}$ is not smaller than the smallest subtree rooted in a child of the root of $T'_{C'}$.
Here, ``smaller'' and ``smallest'' are determined by the total order.
If these conditions hold, then $T'_{C'}$ and $T''_{C''}$ can be \emph{merged} into a treelet $\tc$.
It is easy to see that for each occurrence of $\tc$ rooted in $v$ there are exactly $\beta_T$ pairs of: (i) an occurrence of a treelet $T'_{C'}$ rooted in $v$, and (ii) an occurrence of a treelet $T''_{C''}$ rooted in a neighbor $u$ of $v$, for which the above procedure on $T'_{C'}$ and $T''_{C''}$ returns $\tc$. Consequently, the value of $c(\tc,v)$ is incremented by $\beta_T^{-1} c(T'_{C'},v) \cdot c(T''_{C''},u)$. This procedure is essentially a direct computation of the sum in Equation~\ref{eqn:decomp}, once a particular decomposition of $T_C$ has been chosen by our total order. Since Equation~\ref{eqn:decomp} holds regardless of the chosen decomposition, this implies the correctness of the build-up phase as well.
In this process, the computationally intensive part is the check-and-merge operation, which can be formalized as the primitive:
\begin{itemize}
\item \texttt{merge($T'_{C'}$,$T''_{C''}$)}: if possible, merge $T'_{C'}$, $T''_{C''}$ by appending $T''_{C''}$ to the root of $T'_{C'}$, else FAIL
\end{itemize}
In CC, this primitive is implemented as a recursive algorithm, which can be quite expensive.
This is because CC encodes each treelet $\tc$ as a classic, pointer-based tree data structure.
Here, we show a different implementation that makes the check-and-merge operation much faster.

The second computationally expensive task is storing and accessing the counts.
CC does it as follows: for each node $v \in G$, it keeps a dedicated hash table in main memory which maps each colored treelet $\tc > 0$ to its count $c(\tc,v)$.
In the hash table, the key used for $c(\tc,v)$ is a 64-bit pointer to an instance of $\tc$.
Thus, each entry of CC's table thus uses 128 bits: 64 for the key, and 64 for the integer count.
Already for $k=6$ on a graph with a few million nodes, storing the hash table can require a dozen GB of main memory, so this approach becomes quickly impractical.
Here, we show how to reduce the space used by the count table by almost an order of magnitude.

Before moving on, we note that a perfectly fair porting of CC is not possible.
This is because CC makes heavy use of fast specialized integer hash tables provided by the \texttt{fastutil}\footnote{\url{http://fastutil.di.unimi.it/}} library, which exists only in Java and seems to be crucial to its performance.
Indeed, for the porting we tested three popular libraries --- \texttt{google::sparse\_hash\_map} and \texttt{google::dense\_hash\_map} of the sparsehash library\footnote{\url{https://github.com/sparsehash/sparsehash}}, and \texttt{std::unordered\_map} from the \Cpp\ containers library.
With the first two, the porting is up to $17 \times$ slower than CC, and with the latter one it is up to $7\times$ slower.
Nonetheless, after all optimizations are in place, we are always faster than CC.

\subsection{\motivo: a general-purpose motif counter for \texorpdfstring{k $\le$ 16}{k <= 16} nodes}
\label{sec:k16}
We describe our first toolbox, \motivo, introduced in our preliminary work~\cite{Bressan&2019VLDB}, for graphlets on up to $16$ nodes.
Starting from the \Cpp\ porting of CC, we introduce succinct treelets and the succinct count table, obtaining the performance improvement shown in Figure~\ref{fig:impact1}.

\begin{figure}[ht]
\centering
\hfill
\includegraphics[scale=.7]{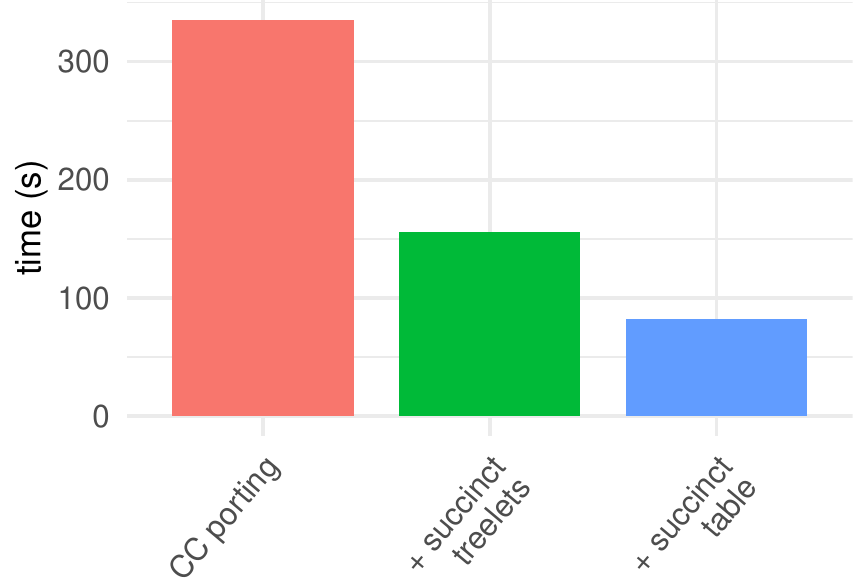}
\hfill
\includegraphics[scale=.7]{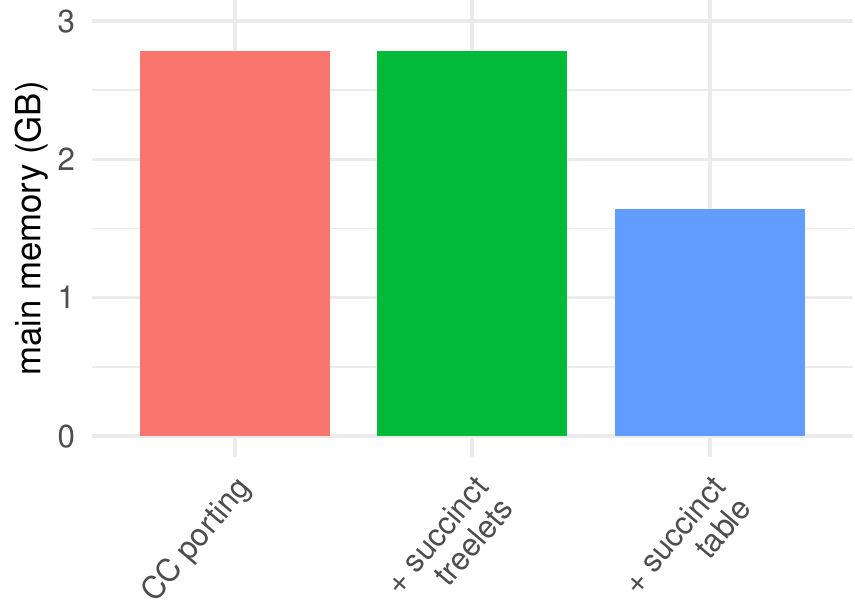}
\hfill
\caption{Cumulative impact of our optimizations on the build-up phase, for \amazon\ with $k=6$. The baseline is the \Cpp\ porting of CC.}
\label{fig:impact1}
\end{figure}

\subsubsection{Succinct treelets and count table}
\label{subsub:succinct}
\paragraph{Treelet representation.}
We drop CC's pointer-based structures, and we represent treelets as bitstrings.
This accelerates \texttt{merge($T$,$T'$)} by up to $150\times$ for $k=5$ and up to $1000\times$ for $k=7$.
Given an uncolored treelet $T$ rooted at $r$, its bitstring representation $s_{T}$ is defined as follows. 
Perform a DFS traversal of $T$ starting from $r$. 
The $i$-th bit of $s_{T}$ is $1$ (resp. $0$) if the $i$-th edge is traversed moving away from (resp. towards) $r$.

This encoding exhibits several nice properties.
First, for all $k \le 16$, it requires at most $30$ bits, which fit in a $4$-byte integer type.
Second, the total order over the treelets $T$ is just the lexicographic order over their encodings (and we can show that this coincides with the total order used by CC as well).
Third, the order serves also as tie-breaking rule for the DFS traversal: the children of a node are visited in the order given by their rooted subtrees.
This implies that every $T$ has a well-defined unique encoding $s_{T}$.
Fourth, merging $T'$ and $T''$ into $T$ boils down to concatenating the bitstrings $1,s_{T''},s_{T'}$, in this order, which is typically faster than manipulating a pointer-based structure.

To encode a \emph{colored} rooted treelet, $\tc=(T,C)$, we simply concatenate $s_{T}$ and the characteristic vector $s_C$ of $C$.\footnote{Given a universe $U={0,1,\dots, eta}$, the characteristic vector $\langle x_0, x_1, \dots, x_{\eta-1} \rangle$ of a subset $S \subseteq U$ contains one bit $x_i$ for each element in $U$, where $x_i$ is $1$ if $i \in S$ and $0$ otherwise.}
For all $k \le 16$, the resulting bitstring $s_{\tc}$ fits in 46 bits.
Set-theoretical operations on $C$ become bitwise operations over $s_C$ (\texttt{or} for union, \texttt{and} for intersection).
And again, the lexicographical order of the encodings gives the total order over the treelets. 
An example of a colored rooted treelet and its encoding is given in Figure~\ref{fig:treelet1}. 
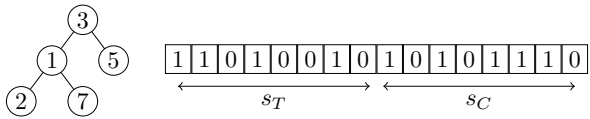
\begin{figure}[ht]
\centering
\scalebox{0.9}{

\begin{tikzpicture}[level distance=0.6cm,
sibling distance=1cm, minimum size=11pt,inner sep=1.5,outer sep=0,
level 1/.style={sibling distance=0.9cm, minimum size=12pt,outer sep=0},
level 2/.style={sibling distance=0.9cm, minimum size=12pt,outer sep=0}]

\tikzstyle{every node}=[circle,draw,minimum size=5pt]

\node (Root)  {3}
    child {
    node {1} 
    child { node {2} }
    child { node {7} }
}
child {
    node {5}
};
\end{tikzpicture}
\hspace*{5pt}
\begin{tikzpicture}
\matrix[matrix of nodes,row sep=0mm,set common column={1,...,16}{nodes={rectangle,draw,minimum width=1em,inner sep=2.8pt}}] (O)
{
1 & 1 & 0 & 1 & 0 & 0 & 1 & 0 &    1 & 0 & 1 & 0 & 1 & 1 & 1 & 0\\
};
\draw[<->](-2.9,-0.4) -- node[below] {$s_{T}$} (-0.1,-0.4);
\draw[<->](0.1,-0.4) -- node[below] {$s_C$} (2.9,-0.4);
\end{tikzpicture}}
\caption{A colored rooted treelet  $\tc$ and its encoding, shown for simplicity on just $8+8=16$ bits. The label $\ell$ of each node represents its color and the corresponding $\ell$-th least significant bit of $s_C$ is set to $1$. $s_T$ is the bitstring representing the tree $T$.}
\label{fig:treelet1}
\end{figure}

\begin{figure}[b]
    \centering
    \includegraphics[scale=0.5]{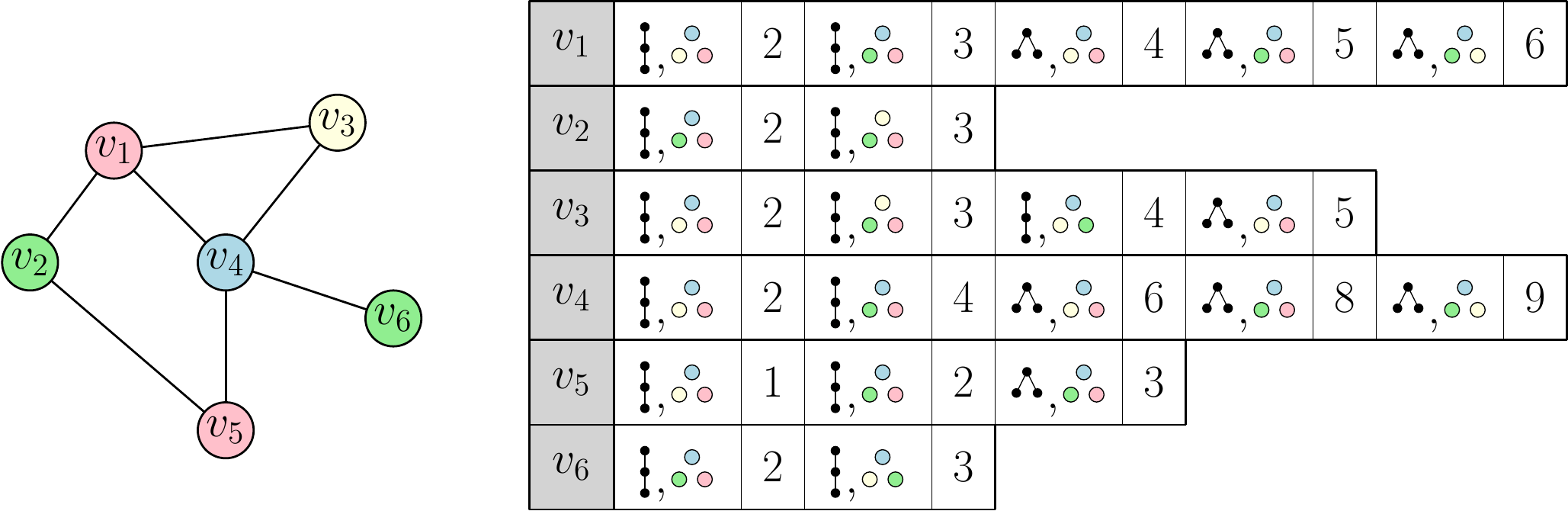}
\caption{Left: a graph $G$ whose vertices have been colored using $k=4$ colors. Right: a graphical representation of the count table (implicitly) storing the number $c(T_C, v)$ of all colored treelets $T_C$ of size $3$ in $G$. Notice how we actually store $\eta(T_C, v)$ instead of $c(T_C, v)$.}
    \label{fig:table_example}
\end{figure}

\paragraph{Count tables.}\hspace*{-7pt}
Instead of using hash tables, \motivo\ maintains the key-value pairs $(\tc, c(\tc, v))$ in a set of arrays, one for each $v \in G$ and each treelet size $h \in \{1,\ldots,k\}$, storing only the entries with a non-zero count $c(\tc, v)> 0$ (note that each array is built only once and never modified thereafter).
Clearly, using arrays makes iteration extremely fast.
Moreover, given the key of an entry, we do not need to dereference any pointer to access the corresponding colored treelet ---the key is the treelet itself.
The price to pay is that searching for a given key is potentially more expensive.
However, by sorting every array according to the total order over the keys (see above), we can find a key via binary search in time $O(k)$, since every array has length $O(6^k)$.\footnote{By Cayley's formula: there are $O(3^k k^{-3/2})$ rooted treelets on $k$ vertices~\cite{otter1948number}, and $2^k$ subsets of $k$ colors.}

For what concerns the treelet counts themselves, while CC uses 64 bits per count, \motivo\ uses 128 bits per count.
This adds a small overhead,\footnote{Tests on our machine show that summing 500k unsigned integers is 1.5$\times$ slower with 128-bit than with 64-bit integers.} but avoids dangerous overflows for large values of $k$.\footnote{Already for $k=6$, the number of $k$-stars centered in a node of degree $2^{16}$ is $\approx\!10^{22}$.}
Moreover, since each key fits into 48 bits, each array entry actually uses 48+128=176 bits.
We conclude with one final optimization.
In place of $c(\tc, v)$, \motivo\  stores the \emph{cumulative} count $\eta(\tc,v) = \sum_{{T'}_{\!C'} \le \tc} c(T'_{C'}, v)$, see Figure~\ref{fig:table_example} for a toy example.
This has several benefits: (i) the count $c(\tc, v)$ can be computed with negligible overhead as the difference between two consecutive entries, (ii) the total treelet count $\eta_v$ for $v$ is at the end of the record, and (iii) we can sample a uniform random treelet rooted in $v$ in time $O(k)$, by drawing a uniform random value $X$ in $\{1,\ldots, \eta_v\}$ and then doing a binary search for $X$ over the array of $v$.

\subsection{\motiveight: counting 8-graphlets in large graphs}
\label{sec:k8}
In this subsection we describe \motiveight, a version of \motivo\ specialized for $k \le 8$.
To this end, we start with \motivo\ and plug in three new ingredients which improve its performance ---see Figure~\ref{fig:impact2}.

\subsubsection{Integer treelet encoding (ITE)}
\label{subsub:ite}
Our first ingredient is an extremely compact representation of treelets as unsigned integers.
The idea is simple: we observe that there are exactly $1991$ rooted colored treelets on at most $8$ nodes.
Therefore, we can encode each treelet $T_{C}$ as an $11$-bit integer.
This reduces the space by more than $4\times$ compared to the bitstring encoding of \motivo.
From an asymptotic point of view, we are using the optimal amount of bits, up to a multiplicative factor $1+o(1)$.
In this sense, one cannot use fewer bits per treelet.
To ensure treelets are memory-aligned, we pad the $11$-bit representation into a $16$-bit integer.
This leaves $5$ spare bits, that we use to encode the length of our variable-length count, see described in Section~\ref{sec:vle}.
The impact of ITE on the overall space usage of \motivo\ is around $\simeq 20\%$ on all instances (see Figure~\ref{fig:impact2}).

Adopting ITE makes treelet manipulations harder.
Recall for example that \texttt{merge($T'_{C'}, T''_{C''}$)} checks if $T'_{C'}$ and $T''_{C''}$ can be merged into a treelet $\tc$, and that $T$ can be decomposed into $T'$ and $T''$.
Since in ITE a treelet is just a number, \texttt{merge} would need to go back and forth between ITE and bitstring encodings.
To resolve, we \emph{precompute} the results of \texttt{merge($T'_{C'}, T''_{C''}$)} on all treelets on up to $8$ nodes, and store all the results, in ITE format, in a bidimensional array.
In this array, entry $[i][j]$ is the ITE index of the treelet resulting from merging $i$ and $j$ (or $-1$ for FAIL).
Another array tells, for every ITE index of a treelet $T$, the ITE indices $i,j$ of the treelets of the canonical decomposition of $T$.
Other arrays tell us the ITE index of the treelet $T$ and the set of colors $C$ associated to a colored treelet $T_C = (T,C)$, and so on.
Thus, each treelet operation is just a sequence of black-blox lookups in these arrays, using only the ITE encoding.
The total size of all arrays is less than $3$ megabytes, which fits in the cache of a modern CPU.
With this technique, the time taken by a single treelet operation is similar to the one required by the bitstring encoding of Section~\ref{subsub:succinct}.

\begin{figure}[ht]
\hfill
\includegraphics[height=100pt]{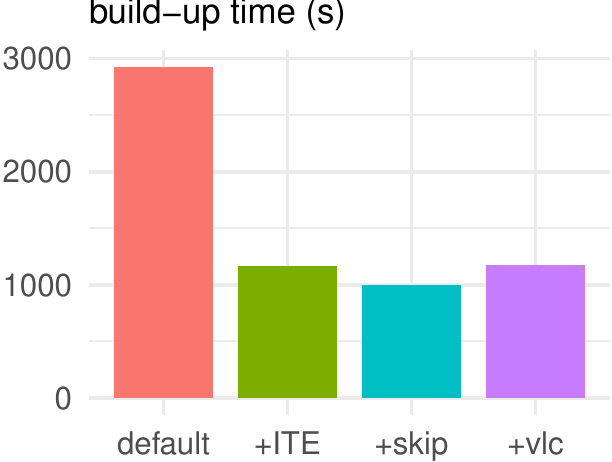}
\hfill
\includegraphics[height=100pt]{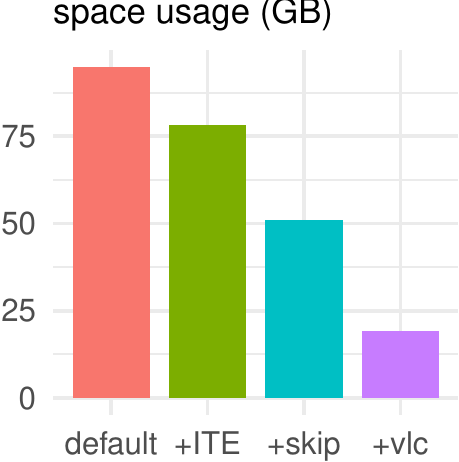}
\hfill
\includegraphics[height=100pt]{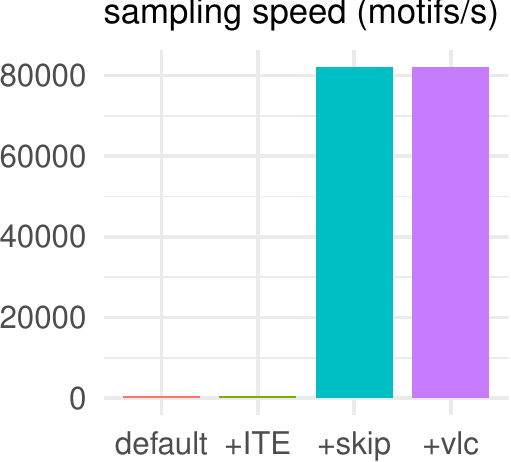}
\hfill\phantom{.}
\caption{\twitter\ graph, $k=6$. Cumulative impact of using ITE, skipping the heaviest round of the dynamic program, and using variable-length counts. The ``default'' baseline is \motivo, described in the previous section.}
\label{fig:impact2}
\end{figure}

\subsubsection{Round skipping via balanced treelet decomposition}
\label{subsub:balanced}
Recall again that, in the build-up phase, we repeatedly merge the counts of smaller treelets.
In particular, in round $k-1$ we produce the counts for all treelets on $k-1$ nodes.
For $k \le 8$, round $k-1$ is consistently the most expensive one, and consumes roughly $40\%$ of the time and space of the entire phase --- see Figure~\ref{fig:rounds}.
We would like to reduce the time and space usage of this round.

Our observation is the following: \emph{a $(k-1)$-treelet count is used only to compute the counts of $k$-treelets whose smallest subtree in the canonical decomposition is a single node}.
For example, a $k$-star is certainly decomposed into a single node and a treelet on $k-1$ nodes, and therefore to count $k$-stars we need the $(k-1)$-treelet counts.
It turns out that stars are the only treelets whose decomposition necessarily contains a treelet on a single node; for all other treelets, we can find a decomposition where the largest subtree contains at most $k-2$ nodes.
And for stars, we can avoid counting them at all: at sampling time, they can be sampled very efficiently in the naive way.

Let us now describe our round skipping technique in more detail.
We rely on the following basic fact:
\begin{lemma}
Any $k$-treelet $T$ that is not a star has an edge that, if cut, yields two trees each with at most $k-2$ nodes.
\end{lemma}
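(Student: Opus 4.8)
The plan is to use the classical centroid decomposition of a tree. Recall that a \emph{centroid} of a tree $T$ on $k$ nodes is a node $v$ such that every connected component of $T-v$ has at most $\lfloor k/2 \rfloor$ nodes; such a node always exists. The idea is that a centroid witnesses the existence of a balanced edge cut, unless the tree is ``too star-like'' around that centroid.

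First I would fix a centroid $v$ of $T$, and write $d$ for its degree in $T$. Deleting $v$ splits $T$ into $d$ subtrees $T_1,\dots,T_d$, each hanging off a distinct neighbor of $v$, with $|T_i| \le \lfloor k/2 \rfloor$ for every $i$ and $\sum_i |T_i| = k-1$. Now pick the edge $e$ between $v$ and the neighbor $r_i$ lying in the largest subtree, say $T_1$ with $|T_1|$ maximum. Cutting $e$ yields two trees: $T_1$ itself, on $|T_1| \le \lfloor k/2 \rfloor \le k-2$ nodes (the last inequality holds for $k \ge 4$; for $k=3$ the only non-star treelet is the path, handled directly), and the complementary tree $T \setminus T_1$ on $k - |T_1|$ nodes. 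The remaining task is to bound $k - |T_1|$ by $k-2$, i.e.\ to show $|T_1| \ge 2$.

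So the crux is: \emph{if $T$ is not a star, then the largest component of $T - v$ (for $v$ a centroid) has at least $2$ nodes.} I would argue the contrapositive. If every $T_i$ has exactly one node, then $v$ is adjacent to $d = k-1$ leaves, i.e.\ $T$ is the $k$-star centered at $v$ --- contradicting the hypothesis. Hence some $T_i$ has $|T_i| \ge 2$, and choosing $T_1$ to be a largest one gives $|T_1| \ge 2$, so $k - |T_1| \le k-2$. Combined with $|T_1| \le \lfloor k/2 \rfloor \le k-2$, the edge $e$ is the desired balanced cut.

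\textbf{The main obstacle} is essentially bookkeeping rather than depth: one must be careful with the small cases ($k=3$, and checking $\lfloor k/2 \rfloor \le k-2$ which fails only at $k \le 3$), and one should double-check that a centroid always exists and gives the $\lfloor k/2 \rfloor$ bound on component sizes --- this is standard but worth stating cleanly. An alternative, self-contained route that avoids invoking centroids: root $T$ arbitrarily and walk down, always descending into the child whose subtree is largest, stopping at the first node $w$ whose subtree has size $\le k-2$; cutting the edge above $w$ works, and non-star-ness is exactly what guarantees we do not over/undershoot. Either way the argument is short; the write-up choice is just whether to cite the centroid fact or reprove the greedy descent inline.
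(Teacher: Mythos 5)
Your proof is correct, but it reaches the result by a considerably longer route than the paper. The paper's argument is two sentences: a tree that is not a star has diameter at least $3$, so it contains a path $(x,u,v,y)$ on $3$ edges; cutting the middle edge $(u,v)$ leaves $x,u$ on one side and $v,y$ on the other, so each side has at least $2$ nodes, hence at most $k-2$. The key simplification you miss is that ``both parts $\le k-2$'' is \emph{equivalent} to ``both parts $\ge 2$'' because the two parts partition the $k$ nodes --- so there is no need for a balanced cut in the $\lfloor k/2\rfloor$ sense, and the whole centroid apparatus (existence, the $\lfloor k/2\rfloor$ component bound, the case split on $k$) is overkill. Your approach does work, and the contrapositive step (all singleton components $\Rightarrow$ star) is exactly the non-star-ness you need, but it carries more machinery than the statement requires. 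One small factual slip worth fixing in your write-up: for $k=3$ the path $P_3$ \emph{is} the star $K_{1,2}$, so there is no non-star $3$-treelet to ``handle directly'' --- the lemma is simply vacuous for $k\le 3$; relatedly, $\lfloor k/2\rfloor \le k-2$ already holds at $k=3$ (it fails only at $k\le 2$). Your alternative ``greedy descent'' sketch at the end is closer in spirit to a minimal argument, though still less direct than the paper's diameter observation.
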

\begin{proof}
If $T$ is not a star then it contains a path on $3$ edges, say $(x,u,v,y)$.
Cutting $(u,v)$ yields two trees, each one with at least $2$ nodes or, equivalently, at most $(k-2)$ nodes.
\end{proof}
Therefore, for any non-star $k$-treelet $T$, we can find a root $u$ and a child $v$ of $u$ that give a \emph{balanced decomposition} of $T$ into $T'$ and $T''$, that is, one where both $T'$ and $T''$ have at most $\le k-2$ nodes, see Figure~\ref{fig:balanced}.
We therefore replace the canonical decomposition of Subsection~\ref{subsub:succinct} with this balanced decomposition, and we exclude $k$-stars from the set of treelets that we count in the $k$-th round.
Then, we completely skip round $(k-1)$ of the dynamic program.
This yields a reduction in both space and time, consistently across all instances, of up to $40\%$ (Figure~\ref{fig:rounds}).

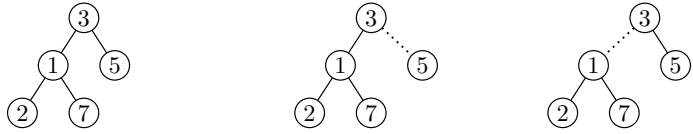
\begin{figure}[h]
\centering\scalebox{.9}{\begin{tikzpicture}[line width=.5pt,level distance=0.7cm,
sibling distance=1cm, minimum size=10pt,inner sep=1.5,outer sep=0,
level 1/.style={sibling distance=.9cm, minimum size=11pt,outer sep=0},
level 2/.style={sibling distance=.9cm, minimum size=11pt,outer sep=0},
every node/.style={solid,black},
norm/.style={solid,black,line width=.5pt},
cut/.style={dotted,thick}
]

\tikzstyle{every node}=[circle,draw,minimum size=5pt]

\node (r1) at(0,0) {3}
    child {
    node {1} 
    child { node {2} }
    child { node {7} }
}
child {
    node {5}
};

\node (r2) at($(r1)+(4.2,0)$) {3}
    child {
    node {1} 
    child { node {2} }
    child { node {7} }
} 
child[cut,sibling distance=1.5cm] {
    node[norm] {5}
};

\node (r3) at($(r2)+(4,0)$) {3}
    child[cut,sibling distance=1.5cm] {
    node[norm] {1} 
    child[norm] { node {2} }
    child[norm] { node {7} }
}
child {
    node {5}
};

%
%
%
%
%
%

\end{tikzpicture}}
\caption{A rooted colored $5$-treelet (left) with its original decomposition in two subtrees on $4$ and $1$ nodes (middle) and its balanced decomposition in two subtrees on $3$ and $2$ nodes (right).}
\label{fig:balanced}
\end{figure}

\begin{figure}[ht]
\centering
\includegraphics[scale=.7]{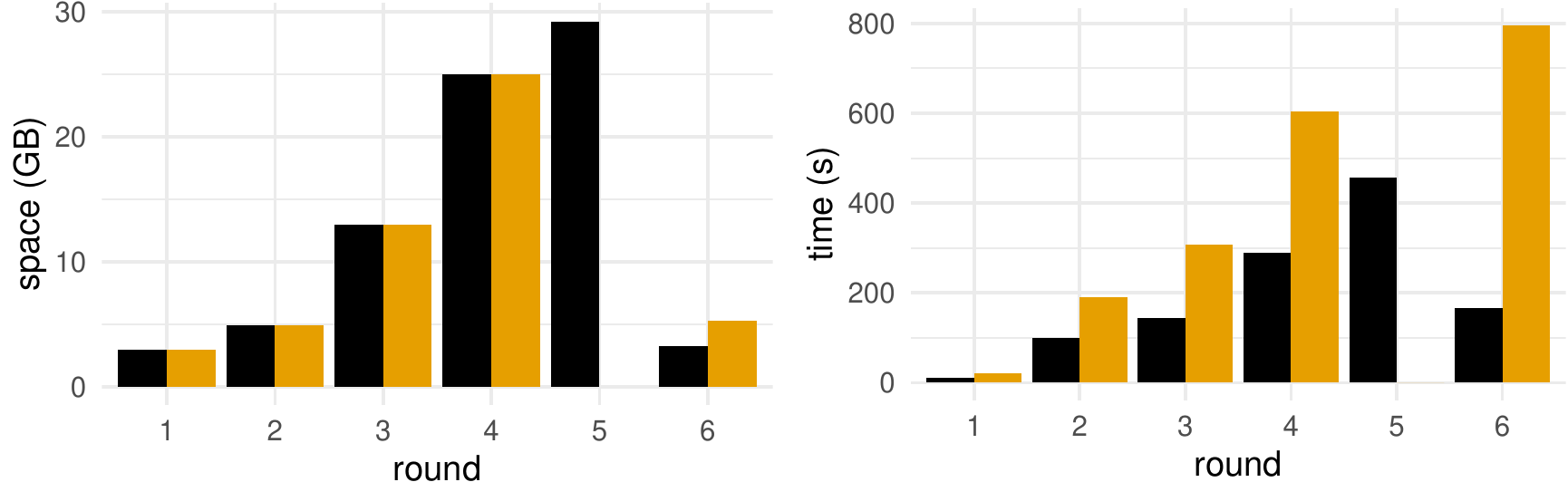}
\caption{Build-up phase on \twitter\ for $k=6$: space and time usage of single rounds, before and after adopting balanced treelet decompositions and round skipping.}
\label{fig:rounds}
\end{figure}

As said, since we do not have the counts of $k$-stars anymore, to sample a star we now simply draw a root node $v$ from $G$ with probability proportional to $\binom{d_v}{k-1}$ and then choose $k-1$ neighbors of $v$ u.a.r.\ without replacement.
This gives an \emph{uncolored} $k$-star u.a.r.\ from $G$, which is even better since we avoid the noise introduced by coloring.
Sampling in this fashion is much faster than using the count table, and since most treelets of $G$ are often stars, the sampling rate can increase by several orders of magnitude (see Figure~\ref{fig:impact2}).

\subsubsection{Variable-length counts}
\label{sec:vle}
Our third and final ingredient is aimed again at saving space. 
To this end, we encode each treelet count $c(\tc,v)$ as a \emph{variable-length count}.
The rationale is that, in practice, most of the treelet counts can be stored in very few bits as shown, e.g., by Figure~\ref{fig:rl}.
Our variable-length counts can be thought of as a variant of Elias delta coding \cite{Elias}, which can represent any integer $x$ using only $(1+o(1))\lceil \log_2 x\rceil$ bits.
In practice, by using the $5$ spare bits left by the encoding of $\tc$ (see Section~\ref{subsub:ite}), we encode each key-value pair $(\tc, c(\tc, v))$ in a byte-aligned memory region, as shown in Figure~\ref{fig:vlc}, that is we use:
\begin{itemize}
\item $11$ bits for the integer representation of $\tc$, see Section~\ref{subsub:ite}.
\item $5$ bits for the length $\ell$ of $c(\tc,v)$, expressed in bytes.
Thus, we can support counts $c(\tc,v)$ on up to $256$ bits (twice the maximum count length supported by \motivo), and with values as large as $2^{256}-1$. 
\item $\ell$ bytes for the binary encoding of $c(\tc,v)$
\end{itemize}

Variable-length counts yield an additional space reduction of $\ge 60\%$ on all graphs for all $k \ge 6$.
The downside is that we cannot find counts via binary search, as the counts are not aligned in memory anymore.
Moreover, we pay the obvious overhead of encoding and decoding the counts.
This has a significant impact on the build-up time; in the worst case, we witness an increase of about $50\%$.
However, the gains outweigh the losses: on our largest graphs \twitter\ and \fs, variable-length counts are crucial to reduce the space footprint enough to manage graphlets on $k=8$ nodes.

\begin{figure}[h!]
\centering\includegraphics[height=100pt]{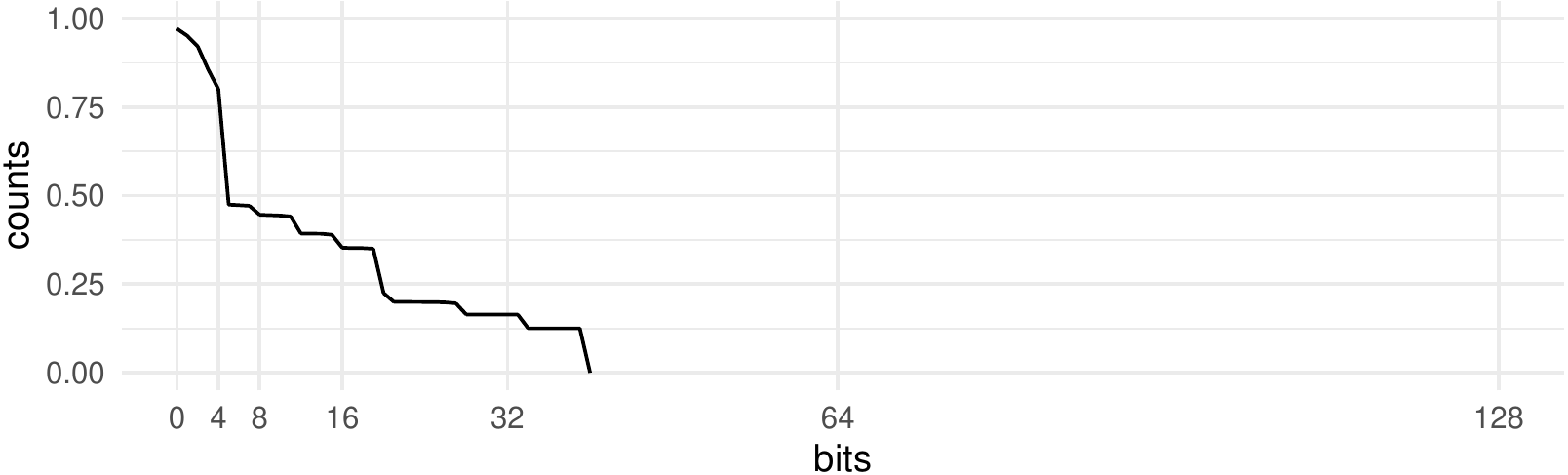}
\caption{The fraction of treelet counts that require at least $b$ bits, as a function of $b$, for the treelet table of \twitter\ with $k=6$. The resulting average bit length is just $14$, almost $90\%$ less than $128$ bits.
}
\label{fig:rl}
\end{figure}

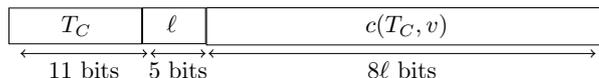
\begin{figure}[h!]
\centering\scalebox{.9}{\begin{tikzpicture}
\matrix[matrix of nodes,row sep=0mm,set common column={1,2,3}{nodes={rectangle,draw,minimum width=1em,inner sep=2.8pt,minimum height=15pt}}] (O)
{
\hspace*{19pt}$\tc$\hspace*{18pt} & \;\;\,$\ell$\phantom{$\tc$}\!\! & \hspace*{60pt} $c(\tc,v)$ \hspace*{60pt}\\
};
\draw[<->](-4.2,-0.4) -- node[below] {$11$ bits} (-2.35,-0.4);
\draw[<->](-2.3,-0.4) -- node[below] {$5$ bits} (-1.5,-0.4);
\draw[<->](-1.45,-0.4) -- node[below] {$8 \ell$ bits} (4.2,-0.4);
\end{tikzpicture}}
\caption{Variable-length encoding of a treelet count.}
\label{fig:vlc}
\end{figure}

\subsection{Lower-level optimizations and architectural details}
\label{sub:build_opt}
For completeness and reproducibility, we describe some additional optimizations and features of \motivo\ and \motiveight, some with a significant impact.

\subsubsection{Zero-rooting}
This is only for \motivo.
Consider a colorful treelet copy in $G$ formed by the nodes $v_1,\ldots,v_h$.
This treelet appears in the records of $v_1,\ldots,v_h$, since it counts as a rooted treelet for each of them.
Therefore, the treelet is counted $h$ times.
This redundancy is necessary when $h < k$, since we need all rootings for the next round of the dynamic program, see~\eqref{eqn:decomp}.
However, for $h=k$ this is useless.
Thus, we store $k$-treelet counts only at nodes of color $0$.
This cuts the running time by $30\%-40\%$, while reducing the size of the $k$-treelets records by a factor of $k$, and the total space usage by $\approx 10\%$.
Notice that \motiveight\ already counts $k$-treelets only once, thanks to the balanced treelet decomposition of Section~\ref{subsub:balanced}.

\subsubsection{Greedy flushing}
To reduce the memory footprint, we use a greedy flushing strategy.
Suppose we are building the count table of the $h$-treelets.
We temporarily store the record/array of $v$ in a hash table, which allows for efficient insertions and lookups; when done, we immediately flush it on disk and delete the hash table.
In this way we produce the count tables for all nodes of $G$, in some order.
Thus, a second I/O pass is needed to sort the tables by their corresponding node of $G$, so that they can be retrieved efficiently in the next round.
This technique increased the total runtime by at most 10\% in all our runs.

\subsubsection{Multi-threading}
We make heavy use of thread-level parallelism in both the build-up and sampling phases.
For the build-up phase, the count of each node $v \in G$ is computed by an independent thread from a thread pool of a fixed size.
As long as the number of remaining vertices is sufficiently large,
each thread is assigned a (yet unprocessed) vertex $v$ and will compute all the counts $c(\tc, v)$ for all pairs $\tc$.
Obviously, when the number of remaining vertices drops below the number of available threads, some threads become idle.
When this happens, we partition the edges of a single vertex $v$ across different threads and make them compute different summands of the outermost sum of Equation~\eqref{eqn:decomp}.
The partial sums are then summed together into $c(\cdot, v)$.
For the sampling phase, samples are by definition independent and are taken by all threads in parallel.

\subsubsection{Memory-mapped reads}
Recall that our treelet count database is stored in external memory.
This entails I/O access, since
computing the count table for treelets of size $h$ requires the count tables of each size $j < h$.
We delegate the task to the operating system by using memory-mapped I/O.
This means that we see the count tables as if they resided in main memory, and the operating system takes care of loading and storing them to disk.
With enough memory this gives virtually no overhead; otherwise, the OS will reclaim memory by unloading part of the tables, and future requests to those parts will incur a page fault and prompt a reload from the disk.
The overhead in terms of additional I/O turns out to be at most 100MB, except for $k=8$ on \lj\ (34GB) and \yelp\ (8GB) and for $k=6$ on \fs\ (15GB).
In these cases the overhead is inevitable, as the aggregate size of the tables is close to or even larger than the memory size.

\subsection{Biased coloring}
\label{sub:biased}
Finally, we describe a simple trick that reduces space significantly, in exchange for accuracy, which is useful on very large graphs.
The idea is to skew the distribution of colors so that fewer treelets become colorful and we have less counts to process and store.

Consider the following color distribution.
We choose each color $i \in \{1,\ldots,k-1\}$ with probability $\lambda \ll \frac{1}{k}$, and color $0$ with probability $1-\lambda(k-1)$.
Then, for any set of $j$ colors $C$, the probability that a given $j$-treelet is colored with $C$ is:
\begin{align}
p_{k,j}(C) = \left\{
\begin{array}{ll}
j! \lambda^j &\; \text{if} \; 0 \notin C \\
\simeq j! \lambda^{j-1} &\; \text{if} \; 0 \in C
\end{array}
\right.
\end{align}
Therefore, if $\lambda$ is sufficiently small, for most treelets we will have a zero count at $v$.
Moreover, most nonzero counts will be for a restricted set of colorings -- those containing color $0$.
This reduces the size of the treelet count table, and thus the running time of the algorithm.
As a downside we suffer a loss of accuracy, since a lower colorful probability increases the variance of the number $c_i$ of colorful copies of $H_i$.
However, if $n$ is large enough, and a substantial fraction of nodes of $G$ is part of some occurrence of $H_i$, then the \emph{total} number of copies $g_i$ of $H_i$ is large enough to ensure concentration.
In particular, by Theorem~\ref{THM:CONC_DEP} the accuracy loss is negligible as long as $\lambda^{k-1} n / \Delta^{k-2}$ is large.

This technique allows us to manage our largest instances, \twitter\ and \fs\ for $k=8$, with an acceptable loss of accuracy, see Section~\ref{sec:exp}.
In those experiments, we use an educated guess of $\lambda=0.001$.
We note that, otherwise, one could find a good value for $\lambda$ by setting $\lambda \ll 1/kn$ and then growing $\lambda$ until a good fraction of counts are positive, at which point Theorem~\ref{THM:CONC_DEP} ensures concentration.

\section{Sampling treelets from the database}
\label{sec:sample}
This section describes in detail the algorithms for sampling graphlets from the treelet count table.
Recall that we support two sampling algorithms: uniform sampling, which is the native sampling algorithm of CC, and our novel adaptive graphlet sampling strategy (AGS).
Uniform sampling is exactly the one described in Section~\ref{sec:cc}.
Hence we directly move on to AGS in the next section; we then conclude by describing lower-level optimizations that apply to both sampling strategies, and, in many cases, increment the sampling rate substantially.

\subsection{Adaptive Graphlet Sampling (AGS)}
\label{sec:ags}
Recall that the main idea of CC is to build a compact database for sampling $k$-treelets from $G$.
Interestingly, we can choose the kind of treelet to sample (a star, a path, etc.).
That is, for every $k$-treelet $T$ our database supports the operation:
\begin{itemize}[itemsep=0pt,parsep=0pt]
\item \sample($T$): return a colorful copy of $T$ u.a.r.\ from $G$
\end{itemize}
We can use this primitive to virtually ``delete'' certain graphlets from the database, and focus on other ones.

Let us explain the idea with an example.
Suppose $G$ contains just two types of colorful graphlets, $H_1$ and $H_2$, of which $H_2$ represents a tiny fraction, say $10^{-10}$.
With uniform sampling, we will need approximately $10^{10}$ samples before finding $H_2$.
Suppose, however, that $H_1$ and $H_2$ are spanned by treelets of different shape, say $T_1$ and $T_2$.
We can then start calling \sample($T_1$), which will return only copies of $H_1$, until we estimate accurately $H_1$.
At this point we call \sample($T_2$), which will return only copies of $H_2$, until we estimate accurately $H_2$, too.
Thus, using \sample($T$) we can estimate both graphlets with just $O(1)$ samples.
Clearly, the general situation is more complex, as we have thousands of graphlets with common spanning treelets.
Still, the idea can be adapted and it works strikingly well.

Let us describe AGS in more detail.
We start by invoking \sample($T$) on the most frequent $k$-treelet $T$ in $G$ (which we know from the database).
Eventually, some graphlet $H_i$ spanned by $T$ will appear enough times, say $\Theta(\epsilon^{-2}\ln(\nicefrac{1}{\delta}))$, so that we can estimate its occurrences with a multiplicative approximation of $1+\epsilon$, with a probability of at least $1-\delta$.
We then say $H_i$ is \emph{covered}.
Now we do not need any additional sample of $H_i$, so we would like to ``delete'' it.
That is, we want to switch to another treelet $T'$ that does \emph{not} span $H_i$.
Such a $T'$ may not exist, but we can use the $T'$ that \emph{minimizes} the probability of returning a copy of $H_i$.
Now the crucial point is that we can find $T'$ as follows.
First, we have a good estimate $\hat{c}_i$ of the number of colorful copies of $H_i$.
Then, for each $k$-treelet $T_j$ we can estimate the number of colorful copies of $T_j$ that span a colorful copy of $H_i$ in $G$ as $\hat{c}_i\, \sigma_{ij}$, where $\sigma_{ij}$ is the number of spanning trees of $H_i$ isomorphic to $T_j$.
Finally, dividing this estimate by the number $t_j$ of colorful copies of $T_j$ in $G$ yields an estimate of the probability that \sample($T_j$) spans a copy of $H_i$.
That is,
\begin{align}
\Pr(\sample(T_j) \text{ yields a copy of } H_i) = \frac{\text{\# of colorful copies of }T_j\text{ in G spanning }H_i}{\text{\# of colorful copies of }T_j\text{ in }G} = \frac{\hat{c}_i \sigma_{ij}}{t_j}
\end{align}
More generally, we need the probability that \sample($T_j$) spans a copy of some covered graphlet:
\begin{align}
\Pr(\sample(T_j) \text{ yields a covered graphlet}) = \frac{1}{t_j}\sum_{H_i \text{ covered}}\hat{c}_i \sigma_{ij}
\end{align}
We switch to the treelet $T_{j^*}$ minimizing this probability, and continue sampling until a new graphlet becomes covered.

The pseudocode of AGS is listed below. 
A graphlet is marked as covered when it has appeared in at least $\bar{c}$ samples.
To have a probability of at least $1-\delta$ of obtaining a multiplicative $(1+\epsilon)$-approximation over all $k$-graphlets one would set $\cerr = O(\epsilon^{-2}\ln(\nicefrac{s}{\delta}))$ where $s=s_k$ is the number of distinct $k$-graphlets.
In our experiments we set $\bar c = 1000$, which gives good accuracy on most graphlets.
We denote by $H_1,\ldots,H_s$ the distinct $k$-node graphlets and by $T_1,\ldots,T_{\varsigma}$ the distinct $k$-node treelets.

\renewcommand{\thealgorithm}{}
\begin{algorithm}[h!]
\caption{AGS($\epsilon, \delta$)}
\begin{algorithmic}[1]
\small
\State $(c_1,\ldots,c_s) \leftarrow (0,\ldots,0)$ \Comment{graphlet counts}
\State $(w_1,\ldots,w_s) \leftarrow (0,\ldots,0)$ \Comment{graphlet weights}
\State $\cerr \leftarrow \lceil\frac{4}{\epsilon^2}\ln(\frac{2s}{\delta})\rceil$ \Comment{covering threshold}
\State $C \leftarrow \emptyset$ \Comment{graphlets covered}
\State{$T_j \leftarrow$ an arbitrary treelet type}
\While{$|C| < s$}
\For{each $i'$ in $1,\ldots,s$}
\State $w_{i'} \leftarrow w_{i'} + \sigma_{i'j}/t_j$ \label{ags:w_j}
\EndFor
\State $T_G \leftarrow$ an occurrence of $T_j$ drawn u.a.r.\ in $G$
\State $H_i \leftarrow$ the graphlet type spanned by $T_G$ \label{ags:h_j}
\State $c_i \leftarrow c_i + 1$ \label{ags:c_j}
\If{$c_i \ge \cerr$} \Comment{switch to a new treelet $T_j$}
  \State $C \leftarrow C \cup \{ i \}$
  \State $j^* \leftarrow \arg \min_{j'=1,\ldots,{\varsigma}} \frac{1}{t_{j'}} \sum_{i' \in C} \sigma_{i'j'} \, c_{i'} / w_{i'}$ \label{ags:estim}
  \State $T_j \leftarrow T_{j^*}$ 
\EndIf
\EndWhile
\State \textbf{return} $(\frac{c_1}{w_1}, \ldots, \frac{c_s}{w_s})$
\end{algorithmic}
\end{algorithm}

\subsection{Approximation guarantees of AGS}
This section is dedicated to showing that AGS provides strong statistical guarantees.
Our main result is that, if AGS chooses the ``right'' treelet $T_{j^*}$, then we obtain multiplicative error guarantees for all graphlets at once.
Formally:
\begin{theorem}
\label{THM:AGS_APX}
If the tree $T_{j^*}$ chosen by AGS at line~\ref{ags:estim} minimizes $\prob[$\sample$(T_j)$ spans a copy of some $H_i \in C]$ then, with probability $(1-\delta)$, when AGS stops, $(c_i/w_i)$ is a multiplicative $(1\pm\epsilon)$-approximation of $g_i$ for all $i=1,\ldots,s$.
\end{theorem}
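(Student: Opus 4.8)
The plan is to analyze AGS as the combination of two orthogonal pieces: an \emph{estimator-correctness} argument showing that each ratio $c_i/w_i$ is an unbiased estimator of $g_i$ at every point in time, and a \emph{concentration} argument showing that once $H_i$ has accumulated $\cerr$ hits, the ratio is within $(1\pm\epsilon)$ of $g_i$ with high probability. First I would set up the right probability space. The algorithm proceeds in rounds; in round $\ell$ it samples a uniform colorful copy of the current treelet $T_{j(\ell)}$, which spans a uniformly random colorful copy of some graphlet. Conditioned on the treelet type $T_{j(\ell)}$ used in round $\ell$ (which depends only on the history up to round $\ell-1$, hence is predictable), the probability that this round "hits" $H_i$ is exactly $p_{i,\ell}:=c_i^{\mathrm{cf}}\,\sigma_{i,j(\ell)}/t_{j(\ell)}$, where $c_i^{\mathrm{cf}}$ is the true number of colorful copies of $H_i$. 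The key bookkeeping identity is that line~\ref{ags:w_j} maintains $w_i=\sum_{\ell}\sigma_{i,j(\ell)}/t_{j(\ell)}$, so $p_{i,\ell}=c_i^{\mathrm{cf}}\cdot(w_i^{(\ell)}-w_i^{(\ell-1)})$. Therefore the "compensated" process $M^{(L)}_i:=c_i^{(L)}-c_i^{\mathrm{cf}}\,w_i^{(L)}$ is a martingale with respect to the natural filtration (its increments have conditional mean zero), and at every stopping point $\E[c_i/w_i]$ tracks $c_i^{\mathrm{cf}}$ — after dividing by $p_k$ this is the unbiasedness of $\hat g_i$ already established in Section~\ref{sub:sampling}, combined with Theorem~\ref{thm:color_conc}/Theorem~\ref{THM:CONC_DEP} for the coloring error.

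Next I would prove concentration of $c_i/w_i$ around $g_i$ at the moment $H_i$ becomes covered. The natural device is a martingale concentration inequality (Freedman's inequality, or the Azuma-type bound for martingales with bounded increments and controlled predictable quadratic variation). The increments of $M_i$ are bounded by $1$ in absolute value, and the predictable quadratic variation is $\sum_\ell p_{i,\ell}(1-p_{i,\ell})\le \sum_\ell p_{i,\ell}=c_i^{\mathrm{cf}} w_i$. Applying Freedman's inequality at the (random but stopping) time $\tau_i$ when $c_i$ first reaches $\cerr$, and using that $c_i^{(\tau_i)}=\cerr$, one gets that $|c_i^{(\tau_i)}-c_i^{\mathrm{cf}} w_i^{(\tau_i)}|\le \epsilon\, c_i^{\mathrm{cf}} w_i^{(\tau_i)}$ except with probability $\exp(-\Omega(\epsilon^2\cerr))$; dividing through by $w_i^{(\tau_i)}$ turns this into a multiplicative $(1\pm\epsilon)$ statement for $c_i/w_i$ versus $c_i^{\mathrm{cf}}$. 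Plugging $\cerr=\lceil\tfrac{4}{\epsilon^2}\ln(\tfrac{2s}{\delta})\rceil$ makes each failure probability at most $\delta/(2s)$, and a union bound over the $s$ graphlet types plus the coloring-error event gives overall failure probability $\le\delta$. Note that the value of $c_i/w_i$ may keep changing after $H_i$ is covered (later rounds can still hit $H_i$), so I would either argue that the returned value equals the value at time $\tau_i$ up to a further martingale fluctuation that is dominated by the same bound, or — cleaner — freeze $c_i$ and $w_i$ at $\tau_i$ in the analysis and note the algorithm can be implemented to do exactly that once $i\in C$.

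The role of the hypothesis in the theorem statement — that $T_{j^*}$ minimizes the probability of re-sampling an already-covered graphlet — is to control the \emph{running time} rather than the correctness: it is what guarantees the \texttt{while} loop terminates after a reasonable number of samples, via the fractional-set-cover / online-greedy competitive analysis alluded to in the introduction. So I would split the claim: correctness of the estimates follows from the martingale argument above and needs no optimality of $T_{j^*}$; termination with the claimed sample complexity is where the greedy choice enters, by a potential/LP-duality argument bounding the total number of rounds in terms of the optimal fractional cover. The main obstacle, and the part I expect to require the most care, is handling the adaptivity cleanly: the treelet chosen in each round depends on all past samples (including the estimates $c_{i'}/w_{i'}$ used at line~\ref{ags:estim}), so every "probability that this round hits $H_i$" statement must be made conditional on the full history, and the stopping times $\tau_i$ are themselves data-dependent — this is exactly why a martingale/optional-stopping formulation is needed instead of a naive Chernoff bound over i.i.d.\ samples, and getting the filtration and the predictable quadratic variation bound exactly right is the crux.
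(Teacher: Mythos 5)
Your proposal is essentially the paper's proof: both set up the martingale $Z_t = c_i - g_i w_i$ (your $M_i$) with increments bounded by $1$ and predictable quadratic variation at most $\sum_\tau P_\tau = g_i w_i$, then apply a Freedman-type martingale tail inequality (the paper invokes Theorem~2.2 of Alon et al.\ 2010, an equivalent bound) and a union bound over the $2s$ one-sided deviation events with $\cerr = \lceil 4\epsilon^{-2}\ln(2s/\delta)\rceil$. You also correctly observe that the optimality hypothesis on $T_{j^*}$ is needed only for the sample-complexity bound (Theorem~\ref{THM:AGS_COST}), not for accuracy --- the paper's proof of this theorem never actually uses it --- and you flag a genuine subtlety the paper treats informally, namely that $c_i$ and $w_i$ continue to drift after $H_i$ is covered, so the analysis should be pinned to the stopping time $\tau_i$ (or the estimate frozen there), exactly as you propose.
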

The proof requires a martingale analysis, since the distribution from which we draw the graphlets changes over time.
To this end, from now on we fix a graphlet $H_i$ and analyse the concentration of its estimate.
We drop the index $i$ from the notation unless necessary.
We start by recalling the following martingale tail inequality from~\cite{Alon&2010}:
\begin{theorem}[\cite{Alon&2010}, Theorem 2.2]
\label{thm:alon}
Let $(Z_0,Z_1,\ldots)$ be a mar\-tin\-gale with respect to the filter $(\mathcal{F}_\tau)_{t \ge 0}$. Suppose that $Z_{\tau+1} - Z_\tau \le M$ for all $\tau$, and write $V_t = \sum_{\tau=1}^t \var{Z_\tau|\mathcal{F}_{\tau-1}}$. Then for any $z,v>0$ we have:
\begin{align}
\prob\left[\exists \, t  : Z_t \ge Z_0 + z, V_t \le v\right] \le \exp{\left(-\frac{z^2}{2(v+Mz)}\right)}
\end{align}
\end{theorem}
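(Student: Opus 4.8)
The plan is to prove this as a standard Freedman-type tail inequality via the exponential-supermartingale (Bernstein-for-martingales) method, combined with a maximal inequality for non-negative supermartingales. First I would pass to the martingale differences $D_\tau := Z_\tau - Z_{\tau-1}$, which satisfy $\E[D_\tau \mid \mathcal{F}_{\tau-1}] = 0$, $D_\tau \le M$, and $\E[D_\tau^2 \mid \mathcal{F}_{\tau-1}] = \var{Z_\tau \mid \mathcal{F}_{\tau-1}} =: v_\tau$; since $v_\tau$ is $\mathcal{F}_{\tau-1}$-measurable, the process $V_t = \sum_{\tau \le t} v_\tau$ is predictable and non-decreasing with $V_0 = 0$, a point that will matter when we handle the variance budget. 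The degenerate case $M \le 0$ is immediate (the martingale is then a.s.\ non-increasing, so $Z_t \ge Z_0 + z$ never occurs for $z > 0$), so I would assume $M > 0$.

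The core step is to exhibit an exponential supermartingale. Fix $\lambda > 0$ and put $\psi(\lambda) := (e^{\lambda M} - 1 - \lambda M)/M^2$. Using the elementary fact that $x \mapsto (e^x - 1 - x)/x^2$ is non-decreasing, the bound $\lambda D_\tau \le \lambda M$ yields the pointwise estimate $e^{\lambda D_\tau} \le 1 + \lambda D_\tau + \psi(\lambda) D_\tau^2$; taking $\E[\cdot \mid \mathcal{F}_{\tau-1}]$ and using $1 + x \le e^x$ gives $\E[e^{\lambda D_\tau} \mid \mathcal{F}_{\tau-1}] \le \exp(\psi(\lambda) v_\tau)$. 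It follows, using that $V_t - V_{t-1} = v_t$ is $\mathcal{F}_{t-1}$-measurable, that $W_t := \exp\!\big(\lambda(Z_t - Z_0) - \psi(\lambda) V_t\big)$ is a non-negative supermartingale with $W_0 = 1$.

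Next I would convert this into the desired maximal bound while accommodating the joint condition $V_t \le v$. Because $V$ is predictable, $\sigma := \inf\{t : V_{t+1} > v\}$ is a stopping time with $\{t \le \sigma\} = \{V_t \le v\}$, and the stopped process $W_{t \wedge \sigma}$ is still a non-negative supermartingale while $V_{t \wedge \sigma} \le v$ always. On the event in the statement the witnessing $t$ satisfies $t \le \sigma$, so $Z_{t \wedge \sigma} = Z_t \ge Z_0 + z$ and hence $W_{t \wedge \sigma} \ge \exp(\lambda z - \psi(\lambda) v)$; Ville's maximal inequality $\prob[\sup_t W_{t \wedge \sigma} \ge a] \le \E[W_0]/a = 1/a$ then gives $\prob[\exists t : Z_t \ge Z_0 + z,\ V_t \le v] \le \exp(-\lambda z + \psi(\lambda) v)$. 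Finally I would optimize in $\lambda$: choosing $\lambda = z/(v + Mz)$ (so $\lambda M \in (0,1)$ and $1 - \lambda M = v/(v+Mz)$) and using $e^x - 1 - x \le x^2/(2(1-x))$ for $x \in (0,1)$ to get $\psi(\lambda) \le \lambda^2/(2(1-\lambda M))$, a short computation gives $-\lambda z + \psi(\lambda) v \le -z^2/(2(v+Mz))$, which is exactly the claimed bound.

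The conceptual content — the moment-generating-function estimate and the choice of $\lambda$ — is entirely routine; the one point requiring genuine care is the truncation in the third step, which is forced by the fact that the event in the theorem couples ``$Z_t$ far from $Z_0$'' with the running variance budget ``$V_t \le v$'', and which relies on $(V_t)$ being predictable together with a clean infinite-horizon version of Ville's maximal inequality for non-negative supermartingales. Since this statement is Theorem~2.2 of~\cite{Alon&2010}, citing it verbatim is of course also a legitimate option.
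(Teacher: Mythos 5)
Your proof is correct, but note that the paper contains no proof of this statement to compare against: it is imported verbatim as Theorem~2.2 of~\cite{Alon&2010} and used as a black box in the martingale analysis of AGS. What you have reconstructed is the standard proof of Freedman's inequality, of which the quoted statement is exactly an instance, and every step checks out: the pointwise bound $e^{\lambda D_\tau}\le 1+\lambda D_\tau+\psi(\lambda)D_\tau^2$ from monotonicity of $(e^x-1-x)/x^2$ together with $D_\tau\le M$, the resulting non-negative supermartingale $W_t=\exp\left(\lambda(Z_t-Z_0)-\psi(\lambda)V_t\right)$ (which uses that $v_t=\var{Z_t|\mathcal{F}_{t-1}}$ is $\mathcal{F}_{t-1}$-measurable), the truncation at $\sigma=\inf\{t:V_{t+1}>v\}$ --- the one genuinely delicate point, and you handle it correctly, since predictability of $(V_t)$ is precisely what makes $\sigma$ a stopping time and gives $\{t\le\sigma\}=\{V_t\le v\}$ --- Ville's maximal inequality, and finally the choice $\lambda=z/(v+Mz)$ with $e^x-1-x\le x^2/(2(1-x))$ on $(0,1)$, which yields exactly $\exp\left(-z^2/(2(v+Mz))\right)$. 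The degenerate case $M\le 0$ is also disposed of correctly. So your argument is a valid self-contained substitute for the citation; the only difference from the paper is that the paper simply cites the result, which, as you note, is the other legitimate option.
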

We now plug the appropriate quantities from our algorithm into Theorem~\ref{thm:alon}.
\begin{enumerate}[label=\Alph*),itemsep=2pt,leftmargin=15pt]
    \item For $t \ge 1$, let $X_t$ be the indicator random variable of the event ``$H_i$ is the graphlet sampled at step $t$'' (line~\ref{ags:h_j} of AGS)
    \item For $t \ge 0$, let $Y_j^t$ be the indicator random variable of the event ``at the end of step $t$, the treelet to be sampled at the next step is $T_j$''
    \item For $t \ge 0$ let $\mathcal{F}_t$ be the event space generated by the random variables $Y_j^{\tau} : j\in[\varsigma], \, \tau=0,\ldots,t$
    \item For any random variable $Z$, then, $\E[Z \,| \,\mathcal{F}_t] = \E[Z \, | \, Y_j^{\tau} : j\in[\varsigma], \, \tau =0,\ldots,t]$, and $\var{Z\,|\,\mathcal{F}_t}$ is defined analogously
    \item For $t \ge 1$ let $P_t = \E[X_t|\mathcal{F}_{t-1}]$ be the probability that the graphlet sampled at the $t$-th invocation of line~\ref{ags:h_j} is $H_i$, as a function of the events up to time $t-1$.
It is immediate to see that $P_t = \sum_{j=1}^{\varsigma} Y_j^{t-1} a_{ji}$
    \item Let $Z_0 = 0$, and for $t \ge 1$ let $Z_t = \sum_{\tau=1}^t (X_t - P_t)$.
Now, $(Z_t)_{t \ge 0}$ is a martingale with respect to the filter $(\mathcal{F}_t)_{t\ge 0}$, since $Z_t$ is obtained from $Z_{t-1}$ by adding $X_t$ and subtracting $P_t$ which is precisely the expectation of $X_t$ w.r.t.\ $\mathcal{F}_{t-1}$
    \item Let $M=1$, since $|Z_{t+1}-Z_t| = |X_{t+1} - P_t| \le 1$ for all $t$
\end{enumerate}
Finally, notice that $\var{Z_t|\mathcal{F}_{t-1}} = \var{X_t|\mathcal{F}_{t-1}}$, since again $Z_t = Z_{t-1} + X_t - P_t$, and both $Z_{t-1}$ and $P_t$ are constant over $\mathcal{F}_{t-1}$, so their variance w.r.t.\ $\mathcal{F}_{t-1}$ is $0$.
Now, $\var{X_t|\mathcal{F}_{t-1}} = P_t(1-P_t) \le P_t$; and therefore we have $V_t = \sum_{\tau=1}^t \var{Z_\tau \,|\,\mathcal{F}_{\tau-1}} \le \sum_{\tau=1}^t P_{\tau}$.
By applying Theorem~\ref{thm:alon} above, we obtain:
\begin{align}
\label{eq:martbound}
\prob\left[\exists \, t : Z_t \ge z, \sum_{\tau=1}^t P_\tau \le v \right] &\le 
\exp{\left(-\frac{z^2}{2(v+z)}\right)} && \forall \; z,v > 0
\end{align}
Now consider AGS($\epsilon, \delta$).
Recall that we are looking at a \emph{fixed} graphlet $H_i$ (which here does \emph{not} denote the graphlet sampled at line~\ref{ags:h_j}).
Note that $\sum_{\tau=1}^t X_{\tau}$ is exactly the value of $c_i$ after $t$ executions of the main cycle (see line~\ref{ags:c_j}).
Similarly, $\sum_{\tau=1}^t P_\tau$ is the value of $g_i \cdot w_i$ after $t$ executions of the main cycle: indeed, if $Y_j^{t-1}=1$, then at step $\tau$ we add to $w_i$ the value $\frac{\sigma_{ij}}{t_j}$ (line~\ref{ags:w_j}), while the probability that a sample of $T_j$ yields $H_i$ is exactly $\frac{g_i\sigma_{ij}}{t_j}$.
Therefore, after the main cycle has been executed $t$ times, $Z_t = \sum_{\tau=1}^t (X_t - P_t)$ is the value of $c_i - g_i w_i$.

We now derive our concentration bounds.
Suppose that, when AGS($\epsilon, \delta$) returns, $\frac{c_i}{w_i} \ge g_i(1 + \epsilon)$, i.e., $c_i(1 - \frac{\epsilon}{1+\epsilon}) \ge g_i w_i$.
On the one hand this implies that $c_i - g_iw_i \ge c_i\frac{\epsilon}{1+\epsilon}$, i.e., $Z_t \ge c_i\frac{\epsilon}{1+\epsilon}$; and since upon termination $c_i = \cerr$, this means $Z_t \ge \cerr\frac{\epsilon}{1+\epsilon}$.
On the other hand it implies that $g_i w_i \le c_i(1 - \frac{\epsilon}{1+\epsilon})$, i.e., $\sum_{\tau=1}^t P_\tau \le c_i(1 - \frac{\epsilon}{1+\epsilon})$; again since upon termination $c_i = \cerr$, this means $\sum_{\tau=1}^t P_\tau \le \cerr(1 - \frac{\epsilon}{1+\epsilon})$.
We can then apply \eqref{eq:martbound} with $z=\cerr\frac{\epsilon}{1+\epsilon}$ and $v=\cerr(1 - \frac{\epsilon}{1+\epsilon})$, and since $v+z=\cerr$ we get:
\begin{align}
\prob\!\left[\frac{c_i}{w_i} \ge g_i(1+\epsilon)\right]
&\le \exp{\!\left(-\frac{(\cerr\frac{\epsilon}{1+\epsilon})^2}{2\cerr}\right)}
= \exp{\!\left(-\frac{\epsilon^2 \cerr}{2(1+\epsilon)^2}\right)}
\end{align}
but $\frac{\epsilon^2 \cerr}{2(1+\epsilon)^2} \ge \frac{\epsilon^2}{2(1+\epsilon)^2} \frac{4}{\epsilon^2}\ln\!\big(\frac{2s}{\delta}\big) \ge \ln\!\big(\frac{2s}{\delta}\big)$ and thus the probability above is bounded by $\frac{\delta}{2s}$.

Suppose instead that, when AGS($\epsilon, \delta$) returns, $\frac{c_i}{w_i} \le g_i(1 - \epsilon)$, i.e., $c_i(1 + \frac{\epsilon}{1-\epsilon}) \le g_i w_i$.
On the one hand this implies that $c_i - g_i w_i \ge \frac{\epsilon}{1-\epsilon} c_i$, that is, upon termination we have $-Z_t \ge \frac{\epsilon}{1-\epsilon}\cerr$.
Obviously $(-Z_t)_{t\ge 0}$ is a martingale too with respect to the filter $(\mathcal{F}_t)_{t\ge 0}$, hence \eqref{eq:martbound} holds if we replace $Z_t$ with $-Z_t$.
Let $t_0 \le t$ be the first step where $-Z_{t_0} \ge \frac{\epsilon}{1-\epsilon} \cerr$; since $|Z_t - Z_{t-1}| \le 1$, it must be $-Z_{t_0} < \frac{\epsilon}{1-\epsilon}\cerr + 1$.
Moreover, $\sum_{\tau=1}^t X_\tau$ is nondecreasing in $t$, so $\sum_{\tau=1}^{t_0} X_\tau \le \cerr$.
It follows that $\sum_{\tau=1}^{t_0} P_\tau = -Z_{t_0} + \sum_{\tau=1}^{t_0} X_\tau < \frac{\epsilon}{1-\epsilon} \cerr + 1 + \cerr = \frac{1}{1-	\epsilon} \cerr +1$.
Applying again \eqref{eq:martbound} with $z=\frac{\epsilon}{1-\epsilon} \cerr$ and $v=\frac{1}{1-\epsilon} \cerr +1$, we obtain:
\begin{align}
\prob\!\left[\frac{c_i}{w_i} \le g_i(1-\epsilon)\right]
&\le \exp{\!\left(-\frac{(\cerr\frac{\epsilon}{1-\epsilon})^2}{2(\frac{1+\epsilon}{1-\epsilon}\cerr+1)}\right)}
\le \exp{\!\left(-\frac{\epsilon^2 \cerr^2}{2(1+\cerr)}\right)}
\end{align}
but since $\cerr \ge 4$ then $\frac{\cerr}{1 + \cerr} \ge \frac{4}{5}$ and so $\frac{\epsilon^2 \cerr^2}{2(1+\cerr)} \ge \frac{2 \epsilon^2 \cerr}{5}$.
By replacing $\cerr$ we get $\frac{2 \epsilon^2 \cerr}{5} \ge \frac{2 \epsilon^2}{5} \frac{4}{\epsilon^2}\ln\!\big(\frac{2s}{\delta}\big) > \ln\!\big(\frac{2s}{\delta}\big)$ and thus once again the probability of deviation is bounded by $\frac{\delta}{2s}$.

By using the union bound on the two cases, the probability that $\frac{c_i}{w_i}$ is not within a factor $(1 \pm \epsilon)$ of $g_i$ is at most $\frac{\delta}{s}$.
Using the union bound once again on all $i \in [s]$, we obtain theorem~\ref{THM:AGS_APX}.

\subsection{Sampling efficiency of AGS}

\subsubsection{Near-optimality of AGS}
Imagine a ``clairvoyant'' algorithm that knows, for every treelet $T_j$, the number of invocations of \sample($T_j$) necessary in order to get the desired accuracy bounds while minimizing the total number of taken samples.
We show that the number of samples used by AGS is close to the number of samples used by this clairvoyant algorithm.
Formally, we prove:
\begin{theorem}
\label{THM:AGS_COST}
If the treelet $T_{j^*}$ chosen by AGS at line~\ref{ags:estim} minimizes $\prob[$\sample$(T_j)$ spans a copy of some $H_i \in C]$, then AGS makes a number of calls to \sample$()$ that is at most $O(\ln(s))=O(k^2)$ times the minimum needed to ensure that every graphlet $H_i$ appears in $\bar c$ samples in expectation.
\end{theorem}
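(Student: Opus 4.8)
The plan is to identify the ``clairvoyant'' benchmark with the optimum of a fractional covering program and then to analyse AGS competitively, in the style of the classical $O(\ln n)$ bound for greedy set cover, adding the ingredients needed because AGS is online and stochastic. Concretely, let $a_{ji}=\prob[\sample(T_j)\text{ yields a copy of }H_i]$, and note $\sum_i a_{ji}=1$ since a colorful copy of a treelet spans exactly one $k$-graphlet. A clairvoyant strategy that allocates $y_j\ge 0$ calls to $\sample(T_j)$ makes every $H_i$ appear $\bar{c}$ times in expectation iff $\sum_j a_{ji}y_j\ge\bar{c}$ for all $i$, and then uses $\sum_j y_j$ samples (this stays true for adaptive strategies, taking $y_j$ as the expected number of calls to $\sample(T_j)$), so the benchmark equals $\mathrm{OPT}=\min\{\sum_j y_j:\ \sum_j a_{ji}y_j\ge\bar{c}\ \forall i,\ y\ge 0\}$. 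From an optimal $y^*$ I extract the greedy guarantee: for any set $U$ of not-yet-covered graphlets, summing the $i\in U$ constraints, swapping the order of summation and using $\sum_j y^*_j=\mathrm{OPT}$ gives $\max_j\sum_{i\in U}a_{ji}\ge \bar{c}\,|U|/\mathrm{OPT}$. Hence, under the hypothesis of the theorem, whenever exactly $u$ graphlets are still uncovered AGS draws from a treelet whose probability $\rho$ of hitting an uncovered graphlet satisfies $\rho\ge p_u:=\bar{c}\,u/\mathrm{OPT}$.

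Next I would account for the cost. Call a sample \emph{useful} if it hits a currently uncovered graphlet, so that it advances that graphlet's counter towards $\bar{c}$; each graphlet accrues exactly $\bar{c}$ useful samples before being covered, so the whole run contains exactly $s\bar{c}$ useful samples, deterministically. The crucial point is that inside a \emph{level} --- a maximal stretch of iterations during which the uncovered set (equivalently, the treelet chosen at line~\ref{ags:estim}) does not change --- the samples are i.i.d., each useful with the same probability $\rho\ge p_u$. Conditioning on the uncovered set of a level freezes $\rho$, and since the useful samples are separated by i.i.d.\ Geometric$(\rho)$ gaps, Wald's identity shows that the number of iterations of that level is $\rho^{-1}\le 1/p_u$ times its number of useful samples, in expectation. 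Writing $N$ for the total number of samples AGS makes and $u_{i,\ell}$ for the number of uncovered graphlets at the moment $H_i$ gets its $\ell$-th useful sample, summing over all levels gives $\E[N]\le\frac{\mathrm{OPT}}{\bar{c}}\,\E\!\big[\sum_{i=1}^{s}\sum_{\ell=1}^{\bar{c}}1/u_{i,\ell}\big]$.

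Finally I would evaluate the sum, almost surely. Since the level is non-increasing in time, $u_{i,\ell}\ge u_{i,\bar{c}}$ for every $\ell\le\bar{c}$, so $\sum_{\ell=1}^{\bar{c}}1/u_{i,\ell}\le\bar{c}/u_{i,\bar{c}}$. Now $u_{i,\bar{c}}$, the number of uncovered graphlets at the instant $H_i$ gets covered, takes each value of $\{1,\dots,s\}$ exactly once as $i$ ranges over all graphlets, because every run has exactly $s$ covering events, occurring when $s,s-1,\dots,1$ graphlets are uncovered respectively, each covering a distinct graphlet. Hence $\sum_{i,\ell}1/u_{i,\ell}\le\bar{c}\sum_i 1/u_{i,\bar{c}}=\bar{c}\sum_{u=1}^{s}1/u=O(\bar{c}\ln s)$ with probability $1$, and therefore $\E[N]=O(\ln s)\cdot\mathrm{OPT}$; since $s=N_k=\exp(\Theta(k^2))$ this is $O(k^2)\cdot\mathrm{OPT}$.

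I expect the main obstacle to be twofold. The conceptual point is the identity $\{u_{i,\bar{c}}\}_i=\{1,\dots,s\}$: it is what turns the bound into a harmonic sum and yields $O(\ln s)$, whereas charging each graphlet's covering cost to $\mathrm{OPT}$ directly would only give the trivial $O(s)$. The technical point is making rigorous the ``$1/p_u$ samples per useful sample at level $u$'' estimate despite the length of a level, the uncovered set, and hence $\rho$ being random; this is exactly where it matters that the treelet is held fixed throughout a level, so that one may apply optional stopping / Wald conditionally on the uncovered set. The remaining ingredients --- the LP averaging behind the greedy guarantee and the bookkeeping of useful samples --- are routine.
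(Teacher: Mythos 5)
Your proof is correct, but it takes a genuinely different route from the paper's. The paper bounds the cost of the deterministic greedy for the integer covering program $\min\langle\mathbf{1},\bfx\rangle$ s.t.\ $\mathbf{A}\bfx\ge\cerr\mathbf{1}$, $\bfx\in\mathbb{N}^\varsigma$, using the classic ``multiplicative decrease of residual'' argument: either a graphlet becomes covered (at most $s$ times), or the choice of $j^*$ shrinks the total residual weight $c^t$ by a factor $(1-1/z)$, so $O(z\ln s)$ steps suffice. The paper then asserts that AGS ``is precisely'' this greedy, so the bound transfers. You instead analyze AGS itself as a stochastic process: you relax the benchmark to the fractional LP, derive the per-level hitting rate $\rho \ge \cerr u/\mathrm{OPT}$ by averaging the LP feasibility constraints, and then control the random sample count via the ``useful samples'' bookkeeping (exactly $s\cerr$ of them, deterministically) plus Wald's identity inside each level. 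The $O(\ln s)$ then comes not from a multiplicative potential drop but from the harmonic-charging observation that the covering events occur at uncovered-set sizes $s,s-1,\ldots,1$, giving $\sum_i 1/u_{i,\cerr}=H_s$. These are the two classical flavors of the greedy set-cover analysis, so the core combinatorics is comparable in difficulty.

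What each buys: the paper's argument is lighter on machinery (no stopping-time / Wald reasoning), but it is implicitly an argument about the expected-coverage surrogate rather than the actual stochastic schedule AGS follows; the identification ``AGS $=$ greedy'' papers over the distinction between a graphlet's \emph{expected} count reaching $\cerr$ and its \emph{observed} count reaching $\cerr$. Your argument makes that distinction explicit and gives a clean $\E[N]=O(\mathrm{OPT}\ln s)$ for the actual algorithm, at the price of invoking optional stopping conditionally on the uncovered set (which you correctly note is the delicate point, and which is handled correctly: $L$ is a stopping time within the level, $\rho$ is measurable w.r.t.\ the level's start, so Wald applies). You also work against the fractional LP rather than the integer program, which is a weaker benchmark and hence a slightly stronger conclusion. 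Both proofs share the same unproved hypothesis, namely that the $T_{j^*}$ selected at line~\ref{ags:estim} is the exact maximizer of $\sum_{i\notin C}a_{ji}$, and both tacitly assume $g_i>0$ for all $i$ (otherwise the covering program is infeasible); that is consistent with the paper.
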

The rest of this section is devoted to proving Theorem~\ref{THM:AGS_COST}.
We will write the problem of minimizing the total number of samples as a covering problem, and show that AGS is a greedy algorithm for that problem.
This will allow us to prove the guarantees of AGS by adapting a standard proof for greedy algorithms for covering problems.

For each $i \in [s]$ and each $j \in [\varsigma]$ let $a_{ji}$ be the probability that \sample($T_j$) returns a copy of $H_i$.
Note that $a_{ji} = g_i\sigma_{ij}/t_j$, which is the fraction of colorful copies of $T_j$ that span a copy of $H_i$.
Our goal is to allocate, for each $T_j$, the number $x_j$ of calls to \sample($T_j$), so that (1) the total number of calls $\sum_{j} x_j$ is minimised and (2) each $H_i$ appears at least $\cerr$ times in expectation.
Formally, let $\mathbf{A} = (a_{ji})^{\transpose}$, so that columns correspond to treelets $T_j$ and rows to graphlets $H_i$, and let $\bfx = (x_1,\ldots,x_{\varsigma}) \in \mathbb{N}^{\varsigma}$.
We obtain the following integer program:
\begin{align*}
\left\{
\begin{array}{l}
\min \, \langle\mathbf{1}, \bfx\rangle\\
\text{s.t.} \, \mathbf{A} \bfx \ge \cerr\,\mathbf{1}\\
\phantom{\text{s.t.}} \, \bfx \in \mathbb{N}^{\varsigma}
\end{array}
\right.
\end{align*}

We now describe the natural greedy algorithm for this problem; it turns out that this is precisely AGS.
The algorithm works in steps.
Let $\bfx^0 = \mathbf{0}$, and for all $t \ge 1$ denote by $\bfx^t$ the partial solution after $t$ steps.
The vector $\mathbf{A} \bfx^t$ is an $s$-entry column whose $i$-th entry is the expected number of occurrences of $H_i$ drawn using the sample allocation given by $\bfx^t$.
We define the vector of residuals at time $t$ as $\bfc^t = \max(\mathbf{0}, \bfc - \mathbf{A} \bfx^t)$, and we write $c^t =  \langle\mathbf{1}, \bfc^t\rangle$.
Note that $\bfc^0 = \cerr\,\mathbf{1}$ and $c^0 = s \cerr$.
Finally, we let $U^t = \{i : c^t_i > 0\}$; this is the set of graphlets not yet covered at time $t$.
Clearly, $U^0 = [s]$.

At step $t$, the algorithm chooses $T_{j^*}$ such that \sample($T_{j^*}$) spans an uncovered graphlet with the highest probability.
To this end, it computes:
\begin{align}
\label{eqn:istar}
j^* := \arg \max_{j=1,\ldots,{\varsigma}} \sum_{i \in U_t} a_{ji}
\end{align}
It then lets $\bfx^{t+1} = \bfx^t + \mathbf{e}_{j^*}$, where $\mathbf{e}_{j^*}$ is the indicator vector of $j^*$, and updates $\bfc^{t+1}$ accordingly.
The algorithm stops when $U^t = \emptyset$, since then $\bfx^t$ is a feasible solution.
We prove:
\begin{lemma}
\label{lem:greedy_cost}
The greedy algorithm returns a solution of cost $O(z\ln(s))$, where $z$ is the cost of the optimal solution.
\end{lemma}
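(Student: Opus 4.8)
The plan is to recognize the greedy of Lemma~\ref{lem:greedy_cost} as the natural greedy for a covering problem and push through a harmonic-sum, \textsc{Set Cover}-style analysis, with the twist that here coverage is \emph{fractional}: a single greedy step pours $a_{j^*i}$ units of coverage into every still-uncovered graphlet $H_i$, and $H_i$ becomes covered only once its accumulated coverage reaches $\bar c$. Consequently a greedy step need not cover anybody, so the textbook charging ``charge $1/|U^t|$ to each newly covered element'' does not apply verbatim; recovering the $\ln s$ factor despite this is the crux.

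The first ingredient is a per-step averaging bound: at any step $t$ with $U^t\neq\emptyset$, the column $j^*$ picked by the greedy satisfies $\sum_{i\in U^t}a_{j^*i}\ge\frac{\bar c}{z}\,|U^t|$. This is immediate from feasibility of the optimum $\mathbf x^*$: since $\mathbf A\mathbf x^*\ge\bar c\,\mathbf 1$, summing that constraint over $i\in U^t$ and swapping the order of summation gives $\sum_j x^*_j\big(\sum_{i\in U^t}a_{ji}\big)\ge\bar c\,|U^t|$; as $\sum_j x^*_j=z$, some column has ``uncovered coverage'' at least $\bar c|U^t|/z$, and the greedy, which by~\eqref{eqn:istar} maximizes exactly this quantity, does at least as well. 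Note this uses only the (uncapped) quantity the greedy actually optimizes, so there is no need to reconcile it with capped marginal gains.

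The second ingredient is a phased, harmonic accounting. Order the graphlets $H_{i_1},\dots,H_{i_s}$ by the step at which they become covered (ties broken arbitrarily), let $\tau_\ell$ be that step for $H_{i_\ell}$ (with $\tau_0:=0$), and call the steps $\tau_{\ell-1}\le t<\tau_\ell$ \emph{phase $\ell$}, of length $\Delta_\ell=\tau_\ell-\tau_{\ell-1}$; the cost returned is $\tau=\sum_{\ell}\Delta_\ell$. Throughout phase $\ell$ the uncovered set is exactly $\{i_\ell,\dots,i_s\}$, hence $|U^t|=s-\ell+1$ there, and combining with the averaging bound, phase $\ell$ delivers at least $\Delta_\ell\cdot\frac{\bar c}{z}(s-\ell+1)$ units of coverage to currently-uncovered graphlets. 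On the other hand, writing $\gamma_{m,\ell}$ for the coverage delivered to $H_{i_m}$ during phase $\ell$ while it is still uncovered, we have $\gamma_{m,\ell}=0$ unless $\ell\le m$, and $\sum_{\ell\le m}\gamma_{m,\ell}=(\mathbf A\mathbf x^{\tau_m})_{i_m}<\bar c+1\le 2\bar c$, because $H_{i_m}$ was uncovered one step before $\tau_m$ and each $a_{ji}\le 1$ (it is a probability), while $\bar c\ge 4$. Equating the per-phase bounds gives $\Delta_\ell\le\frac{z}{\bar c(s-\ell+1)}\sum_{m\ge\ell}\gamma_{m,\ell}$, whence
\[
\tau=\sum_{\ell=1}^s\Delta_\ell\le\frac{z}{\bar c}\sum_{m=1}^s\sum_{\ell=1}^m\frac{\gamma_{m,\ell}}{s-\ell+1}\le\frac{z}{\bar c}\sum_{m=1}^s\frac{1}{s-m+1}\sum_{\ell=1}^m\gamma_{m,\ell}\le 2z\sum_{q=1}^s\frac1q=2z\,H_s=O(z\ln s),
\]
using $\frac{1}{s-\ell+1}\le\frac{1}{s-m+1}$ for $\ell\le m$. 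This is exactly Lemma~\ref{lem:greedy_cost}.

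The genuinely non-textbook step — and the one I expect to be the main obstacle to get right — is precisely this: since progress is fractional and no single step is guaranteed to cover anyone, one must charge by \emph{coverage phase} rather than by covering event, and redistribute each graphlet's total ${\le}2\bar c$ coverage across the phases with the weights $1/(s-\ell+1)$; this is what turns $\sum_\ell\Delta_\ell$ into a harmonic sum instead of the trivial $O(zs)$ bound. The remaining difficulties are routine book-keeping: handling ties (several graphlets covered in the same step just make some phases empty, which is harmless), the off-by-one in ``$(\mathbf A\mathbf x^{\tau_m})_{i_m}<\bar c+1$'', and matching the step indexing to the paper's $\mathbf x^t$, $U^t$, $\mathbf c^t$.
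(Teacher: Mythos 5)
Your proof is correct and takes a genuinely different route from the paper's. Both proofs rest on the same key fact extracted from feasibility of the optimum (some column delivers at least $\bar c\,|U^t|/z$ fresh coverage to the uncovered set, and the greedy's column does at least as well), but they turn it into the $\ln s$ factor differently. The paper runs a potential-function argument on the total residual $c^t=\sum_i\max(0,\bar c-(\mathbf A\bfx^t)_i)$: it splits steps into at most $s$ ``covering'' steps and the rest, shows that on the latter $c^t$ decays by a factor $(1-1/z)$, concludes that after $O(z\ln s)$ such steps every graphlet has weight $\ge\bar c/2$, and then repairs the factor $2$ by a doubling trick (replace the target $\bar c$ by $2\bar c$, which at most doubles $z$). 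You instead do Chv\'atal-style harmonic charging by phase: partition the run into phases indexed by covering events, bound each phase length $\Delta_\ell$ by the averaging inequality $\Delta_\ell\le\frac{z}{\bar c(s-\ell+1)}\sum_{m\ge\ell}\gamma_{m,\ell}$, observe that each graphlet accumulates $<\bar c+1\le 2\bar c$ coverage while uncovered (the off-by-one coming from $a_{ji}\le 1$), and redistribute this budget over the weights $1/(s-\ell+1)$ to get $\tau\le 2zH_s$ directly. Your version avoids both the case split on whether a step covers someone and the doubling trick (the latter being slightly delicate in the paper, since changing the target implicitly changes which greedy is being run), and it gives an explicit constant; the paper's version is more compact on the page but leans on the reader to see that the decay argument and the doubling compose cleanly. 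Both correctly yield $O(z\ln s)$.
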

\balance
\begin{proof}
Let $w_j^t = \sum_{i \in U_t} a_{ji}$.
For any $j \in [\varsigma]$ let $\Delta_j^t = c^{t} - c^{t+1}$. This is the decrease in the overall residual weight that we would obtain if $j^* = j$.
Note that $\Delta_j^t \le w_j^t$.
We consider two cases.\\
\textbf{Case 1}: $\Delta_{j^*}^t < w_{j^*}^t$.
This means that for some $i \in U_t$ we have $c_i^{t+1} = 0$, implying $i \notin U_{t+1}$.
In other terms, $H_i$ becomes covered at time ${t+1}$.
Since the algorithm stops when $U_t = \emptyset$, this case occurs at most $|U^0| = s$ times.
\\
\textbf{Case 2}: $\Delta_{j^*}^t = w_{j^*}^t$.
Suppose that the original problem admits a solution with cost $z$.
Obviously, the ``residual'' problem where $\mathbf{c}$ is replaced by $\mathbf{c}^{t}$ admits a solution of cost $z$, too.
This implies the existence of $j \in [\varsigma]$ with $\Delta_j^t \ge \frac{1}{z} c^t$; otherwise, any solution for the residual problem would have cost strictly larger than $z$.
But, by the choice of $j^*$, we have $\Delta_{j^*} = w_{j^*}^t \ge w_{j}^t \ge \Delta_j^t$ for any $j$, hence $\Delta_{j^*}^t \ge \frac{1}{z} c^t$.
Thus by choosing $j^*$ we get $c^{t+1} \le (1-\frac{1}{z})c^t$.
Therefore, after running into this case $\ell$ times, the residual cost is at most $c^0 (1-\frac{1}{z})^\ell$.

Note that $\ell + s \ge c^0 = s \cdot \cerr$ since at any step the overall residual weight can decrease by at most $1$.
Therefore the algorithm performs $\ell + s = O(\ell)$ steps.
Furthermore, after $\ell + s$ steps we have $c^{\ell+s} \le s \cerr e^{-\frac{\ell}{z}}$.
By choosing $\ell=z\ln(2s)$, we obtain $c^{\ell+s} \le \frac{\cerr}{s}$, and therefore each one of the $s$ graphlets receives a weight of at least $\frac{\cerr}{2}$.
To correct the factor $\frac{1}{2}$, replace $\cerr\,\mathbf{1}$ with $ 2 \cerr\,\mathbf{1}$ in the original problem.
The cost of the optimal solution is then at most $2 z$, and in $O(z \ln(s))$ steps the algorithm finds a cover where each graphlet has weight at least $\cerr$.
\end{proof}
Now, note that the treelet index $j^*$ given by~\eqref{eqn:istar} remains unchanged as long as $U_t$ remains unchanged.
Therefore we need to recompute $j^*$ only when some new graphlet exits $U_t$, i.e., becomes covered.
In addition, we do not need each value $a_{ji}$, but only their sum $\sum_{i \in U_t} a_{ji}$.
This is precisely the quantity that AGS estimates at line~\ref{ags:estim}.
Theorem~\ref{THM:AGS_COST} follows immediately as a corollary.

\subsubsection{A general lower bound}
We conclude by showing a lower bound for \emph{all} algorithms based solely on the primitive \sample($T$).
This includes many natural graphlet sampling algorithms such as~\cite{Jha&2015,Wang&2015,Wang&2016}.
\begin{theorem}
\label{thm:samplelb}
For any constant $k \ge 2$ there are graphs on an arbitrarily large number of nodes $n$ such that (i) some graphlet $H$ represents a fraction $p_H = 1/\operatorname{poly}(n) = \Omega(n^{1-k})$ of all graphlet copies, and (ii) any algorithm needs $\Omega(1/p_H)$ calls to \sample$(T)$ in expectation to just find one copy of $H$.
\end{theorem}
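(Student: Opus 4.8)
The plan is to exhibit, for each large $n$, a graph $G$ on $n$ nodes that is the disjoint union of a ``bulk'' graph $A$ on $\approx n$ nodes and a single planted copy of the clique $H = K_k$ as a separate connected component. The point of $A$ is to contain many copies of \emph{every} $k$-node treelet while containing no induced $K_k$ at all; since the spanning trees of $K_k$ are exactly all $k$-node trees, this forces every call $\sample(T)$ to spend almost all of its probability on $T$-copies living in $A$, none of which induces a $K_k$. Concretely I would take $A$ to be the random graph $\mathcal{G}(n,p)$ with $p = n^{-\alpha}$ for a constant $\alpha$ in the window $\big(\tfrac{2}{k-1},\tfrac{k}{k-1}\big)$, which is nonempty precisely for $k \ge 3$ (the case $k=2$ is degenerate: there is only one $2$-node graphlet, $K_2$, so $p_H = 1$ and the bound is trivial). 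The reason a \emph{random} bulk is convenient rather than an explicit one is that in $\mathcal{G}(n,p)$ all $k$-node trees have exactly $k-1$ edges and hence the same expected number $\Theta(n^{k}p^{k-1}) = \Theta(n^{c})$ of copies, with $c := k-\alpha(k-1) \in (0,k-2)$, and this is the same order as the expected total number of connected $k$-subgraphs of $A$; explicit sparse or complete-multipartite constructions tend to make some treelets (e.g.\ long paths) much rarer than the bulk of the graphlets, which would weaken the bound.

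The first step is a probabilistic-method argument: show that with probability $1-o(1)$ the random $A$ simultaneously (a) contains no $K_k$, (b) has $(1\pm o(1))\,\Theta(n^{c})$ copies of each of the constantly many $k$-node treelets, and (c) has $\Theta(n^{c})$ induced connected $k$-subgraphs in total. For (a) I would use Markov on $\E[\#K_k] = \binom{n}{k}p^{\binom k2} = \Theta(n^{\,k-\alpha k(k-1)/2}) \to 0$, which is exactly where $\alpha > \tfrac{2}{k-1}$ is used. For (b) the expectation is $\Theta(n^{c})$ and, since $p = n^{-\alpha} \gg n^{-k/(k-1)}$ sits well above the appearance threshold of a ($k$-vertex, strictly balanced) tree, a second-moment/Chebyshev computation for constant $k$ yields concentration; a union bound over the $O(1)$ tree types handles them all at once. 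For (c) the upper bound $O(n^{c})$ follows from a union bound over spanning trees ($\prob[\text{a fixed }k\text{-set is connected}] \le k^{k-2}p^{k-1}$) plus Markov, and the lower bound $\Omega(n^{c})$ follows because the $\Theta(n^{c})$ copies of $P_k$ already certify $\Omega(n^{c})$ distinct connected $k$-sets. Fixing one such good $A$ for each large $n$ and adjoining the $K_k$-component gives $g_H = 1$ induced copy of $H$ and $g = \Theta(n^{c})$ induced connected $k$-graphlets in total, so $p_H = g_H/g = \Theta(n^{-c})$, which is $1/\mathrm{poly}(n)$ and, since $c < k-1$, is $\Omega(n^{1-k})$; this establishes (i).

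The second step is the lower bound (ii). Fix any treelet shape $T$ on $k$ nodes. A copy of $T$ in $G$ whose vertex set induces a $K_k$ must lie entirely inside the $K_k$-component, since the two parts are disconnected, so there are at most $k! = O(1)$ of them; on the other hand $G$ has $\Theta(n^{c})$ copies of $T$ in total. Hence $\prob[\sample(T)\text{ returns a copy of }H] = O(1)/\Theta(n^{c}) = O(n^{-c}) = O(p_H)$, uniformly over $T$. Now take any algorithm whose only access to $G$ is via the primitive $\sample(\cdot)$: at step $t$ it picks a shape $T^{(t)}$ as a (possibly adaptive) function of the first $t-1$ answers and receives a fresh uniform copy of $T^{(t)}$. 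Let $\tau$ be the first step returning a copy of $H$ and let $\mathcal{F}_{t}$ be the history. Then $\prob[\text{step }t\text{ returns }H \mid \mathcal{F}_{t-1}] \le q$ with $q = O(p_H)$, so $\prob[\tau > t] \ge (1-q)^{t}$ by induction on $t$, and therefore $\E[\tau] = \sum_{t\ge 0}\prob[\tau>t] \ge 1/q = \Omega(1/p_H)$, which is the claim.

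The main obstacle is Step 1, specifically the simultaneous control of \emph{all} $k$-vertex tree counts together with $K_k$-freeness: it is this pair of requirements that pins $\alpha$ to the interval $\big(\tfrac{2}{k-1},\tfrac{k}{k-1}\big)$ and that motivates using a random bulk rather than an explicit one. Once $\alpha$ is fixed there, everything reduces to routine first/second-moment estimates for a constant $k$. A secondary point is that the bound must hold against \emph{adaptive} algorithms and unconditionally (no computational assumptions); this is what the geometric/martingale-style argument in Step 2 takes care of, since each query fails with conditional probability at least $1-q$ no matter what the algorithm learned before.
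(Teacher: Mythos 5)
Your proof is correct, but it takes a genuinely different and considerably heavier route than the paper. The paper's construction is an explicit $(n-k+2,k-2)$ lollipop graph (a clique on $n-k+2$ nodes with a dangling $(k-2)$-node path), with $H$ chosen to be the \emph{path} $P_k$. The key observation the paper exploits — and that your proposal misses — is that $P_k$ has a \emph{unique} spanning tree (itself), so $\sample(T')$ for any other tree $T'$ can never return a copy of $H$, and one only needs to bound the yield of $\sample(P_k)$: the lollipop has $\Theta(n^k)$ non-induced path copies but only $\Theta(n)$ induced $P_k$ copies (forced to use the dangling tail), giving $p_H = \Theta(n^{1-k})$ with elementary counting. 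Your construction instead sets $H = K_k$, which has \emph{every} $k$-tree as a spanning tree, so you must simultaneously guarantee that the bulk contains no $K_k$ yet has $\Theta(n^c)$ copies of every tree shape — this is exactly what forces you into the random-graph window $\alpha \in (2/(k-1),\,k/(k-1))$ and the first/second-moment machinery, and it caps $p_H$ at $\Theta(n^{-c})$ with $c < k-2$, strictly weaker than the paper's tight $\Theta(n^{1-k})$. What your write-up buys in return is two things the paper leaves implicit: (i) the per-query failure bound holds \emph{uniformly over every treelet shape} $T$ by construction (rather than by the "other shapes never help" argument that the path case silently uses), and (ii) you spell out the geometric/adaptive argument $\E[\tau] \ge 1/q$ over the query history, which the paper compresses into a single sentence. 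Both proofs establish the stated theorem; the paper's is shorter and quantitatively stronger, yours is more modular and makes the adversarial/adaptive structure of part (ii) explicit.
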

\begin{proof}
Let $T$ and $H$ be the path on $k$ nodes.
Let $G$ be the $(n-k+2,k-2)$ lollipop graph; so $G$ is formed by a clique on $n-k+2$ nodes and a dangling path on $k-2$ nodes, connected by an arc.
$G$ contains $\Theta(n^k)$ non-induced occurrences of $T$ in $G$, but only $\Theta(n)$ induced occurrences of $H$ (all those formed by the $k-2$ nodes of the dangling path, the adjacent node of the clique, and any other node in the clique).
Since there are at most $\Theta(n^k)$ graphlets in $G$, then $H$ forms a fraction $p_H=\Theta(n^{1-k})$ of all the graphlets.
Obviously, $T$ is the only spanning tree of $H$.
However, an invocation of $\operatorname{sample}(G,T)$ returns $H$ with probability $\Theta(n^{1-k})$, and therefore we need $\Theta(n^{k-1})=\Theta(1/p_H)$ samples in expectation before obtaining $H$.
One can make $p_H$ larger by considering the $(n',n-n')$ lollipop graph for larger values of $n'$.
\end{proof}

\subsection{Lower-level optimizations and architectural details}
For completeness and reproducibility, as we did for the build-up phase, we describe some lower-level details of uniform sampling and AGS.

\subsubsection{Alias method sampling}
Recall that sampling starts by drawing a node $v$ with probability proportional to $c(\tc,v)$.
We do this in time $O(1)$ by using the alias method~\cite{Vose91}.
This method requires us to build an \emph{alias table}, which requires $O(n)$ time and space.
For uniform sampling, the alias table is built in the second stage of the build-up phase.
For AGS, the alias table is rebuilt every time a new treelet is selected.
In any case we observe that, in practice, building the alias table consumes a negligible fraction of the overall running time.

\subsubsection{Neighbor buffering}
We have observed that, if $G$ has a node $v$ with degree $d_v=\Delta$ much higher than all other nodes, then the sampling rate is very low.
We argue that the reason is the following.
First, if $\Delta$ is large then $c(\tc,v)$ is large, so the sampling routine will often choose $v$ as root node. 
Second, drawing a neighbor of $u$ will take time, as we have to potentially scan $\Theta(\Delta)$ nodes.
The combination of these two effects imply that, if $\Delta$ is large, the sampling phase will spend most of the time scanning the neighbors of $v$.
To mitigate this problem, the first time we sample a neighbor of $v$, we actually sample \emph{many} neighbors of $v$, say $B$, and keep them for later.
To this end we use reservoir sampling, which allows us to sample $B$ neighbors by sweeping over them only once, and thus at the same cost of sampling just one neighbor.
We apply this trick to every node $v$ with $d_v \ge \Delta_0$ for some tunable parameter $\Delta_0$.
As a result, we scan the neighbors of large-degree nodes only once in a while.
As Figure~\ref{fig:buffered} shows, this increases the sampling speed of \motivo\ significantly (we note that \motiveight\ already achieves those sampling rates and buffering does not increase it further).
\begin{figure}[ht]
\centering
\includegraphics[scale=.85]{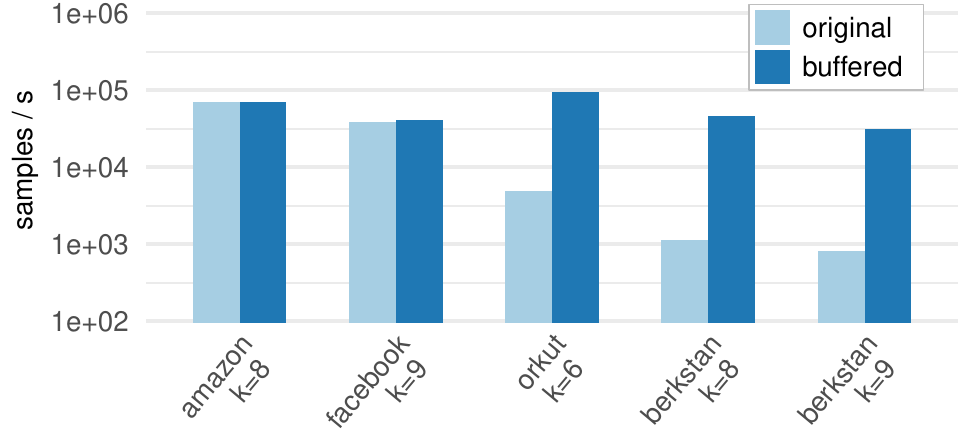}
\caption{\normalfont impact of neighbor buffering on sampling.}
\label{fig:buffered}
\end{figure}

\subsubsection{Graphlet manipulations}
Recall that, after sampling a graphlet occurrence, we have to perform isomorphism tests (to identify its class $H$) and compute its spanning trees (in order to weigh the sample).
To perform the isomorphism test, we first replace the graphlet with a canonical representative of its isomorphism class, computed using the Nauty library~\cite{McKay201494}.
Then, as the $k \times k$ (boolean) adjacency matrix of the graphlet is symmetric with diagonal $0$, we pack it as a $(k-1) \times \frac{k}{2}$ matrix if $k$ is even and in a $k \times \frac{k-1}{2}$ matrix if $k$ is odd (see e.g.\ \cite{BaroudiSL17}). 
Finally, we further reshape this matrix into a $1 \times \frac{k^2 - k}{2}$ vector, which fits into $128$ bits for all $k \le 16$.
Therefore, every graphlet is mapped into a $128$-bit string identifying its isomorphism class, and the isomorphism test between two graphlets boils down to comparing these $128$-bit encodings.

To compute the number of spanning trees $\sigma_i$ of $H_i$, we employ Kirchhoff's matrix-tree theorem, which relates $\sigma_i$ to the determinant of a submatrix of the Laplacian $H_i$.
The running time is $O(k^3)$.
To compute the number $\sigma_{ij}$ of occurrences of a specific treelet $T_i$ in $H_j$ (needed for our sampling algorithm AGS, see Section~\ref{sec:ags}), we use an in-memory implementation of the build-up phase where each vertex $H_j$ is assigned a distinct color in $\{0, \dots, k-1\}$.

\section{Experimental results}
\label{sec:exp}
We measure the performance of \motivo\ and \motiveight\ in terms of running time, space usage, and accuracy of the counts, with a special attention towards \motiveight.
We recall that CC is the current state of the art; in particular, algorithms based on random walks are outperformed by CC, see~\cite{Bressan&2018b}.
Therefore, we compare only against CC.
All our experiments are performed on an Amazon EC2 \texttt{c5d.9xlarge} instance, with 36 virtual CPUs, 72GB of main memory, and a 900GB solid-state disk drive used to store the count tables.

To begin, we tested \motivo\ and \motiveight\ on all our graphs for increasing values of $k$, stopping when witnessing a slowdown due to excessive I/O (recall that our algorithms must repeatedly read and store the count tables on disk).
In the sampling phase, we took $5$ million samples. As we show below, this was sufficient to guarantee high accuracy on most graphlets.
Table~\ref{tab:graphs} summarizes the results.
Using \motiveight\ we reached $k=8$ on all our graphs, and using \motivo\ we reached $k>8$ on half of them.
The table does not show the \yeast\ graph~\cite{jeong2001lethality}, a small graph on which we successfully ran \motivo\ for $k=16$ in less than three hours (we recall that there are $6 \cdot 10^{22}$ distinct motifs on $16$ nodes).
For comparison,~\cite{Alon&2008} on \yeast\ reached $k=10$ and only on tree-like graphlets.
\begin{table}[ht]
\caption{Summary of our results. For each graph we report the maximum reached value of $k$ and the total wall time ($^*$ = with biased coloring). The wall time includes sampling.}
\label{tab:graphs}
{
\begin{tabular}{lrrllll}
graph & nodes (millions) & edges (millions) & {source} & {k} & wall time & algorithm \\ 
\hline
\facebook & $0.1$ & $0.8$ & \SWS & 11 & 1h & \motivo\\
\dblp & $0.9$ & $3.4$ & \SNAP & 9 & 7m & \motivo\\
\amazon & $0.7$ & $3.5$ & \SNAP & 9 & 8m & \motivo\\
\bstan & $0.7$ & $6.6$ &  \SNAP & 9 & 55m &\motivo\\
\yelp & $7.2$ & $26.1$ & \YLP & 8 & 13m &\motiveight\\
\lj & $5.4$ & $49.5$ & \LAW & 8 & 24m &\motiveight\\
\orkut & $3.1$ & $117.2$ & \SWS & 8 & 1h11m &\motiveight\\
\twitter & $41.7$ & $1202.5$ & \LAW & 8$^*$ & 2h45m&\motiveight\\
\fs &  $65.6$ & $1806.1$ & \SNAP & 8$^*$ & 1h10m & \motiveight
\end{tabular}
}
\end{table}

\subsection{Computational efficiency}
\label{sub:perf}
Figure~\ref{fig:abstime} shows the performance of \motivo\ and \motiveight\ for $k=8$: the running time (seconds), the total space usage of the build-up phase (GB), and the speed of uniform sampling (graphlets/second).
This shows how \motiveight\ manages graphs significantly larger than the state of the art.
Note also the reduction in space usage of \motiveight\ compared to \motivo.
For both \twitter{} and \fs{}, we used biased coloring to keep the build-up time below 3 hours.
\begin{figure}[h!]
\centering
\includegraphics[scale=.71]{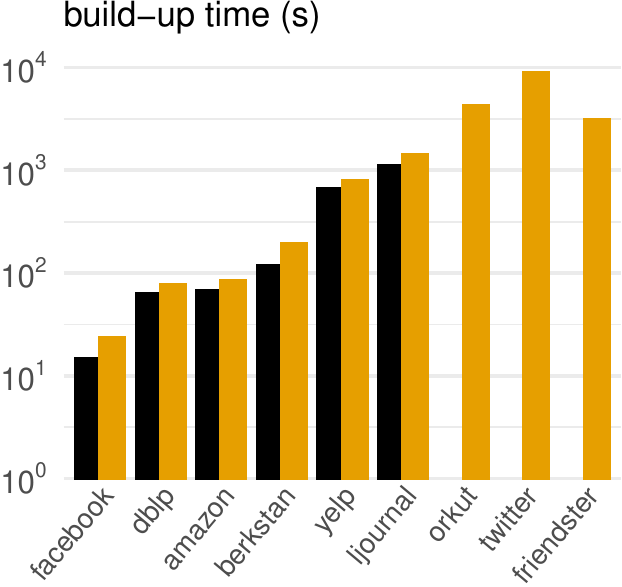}
\hspace*{10pt}
\includegraphics[scale=.71]{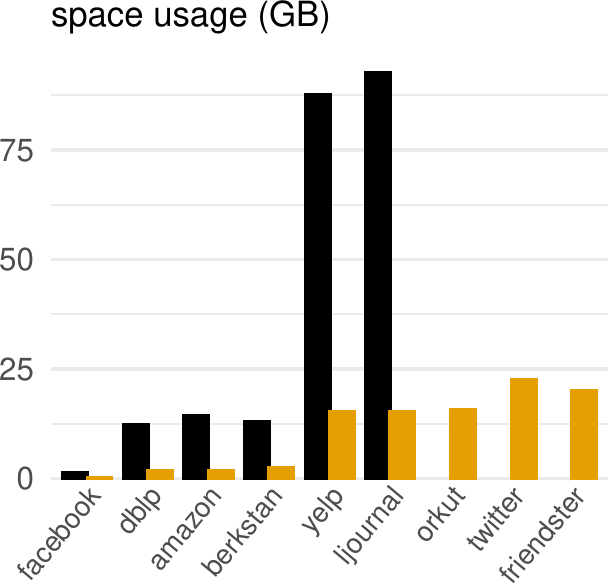}
\hspace*{10pt}
\includegraphics[scale=.71]{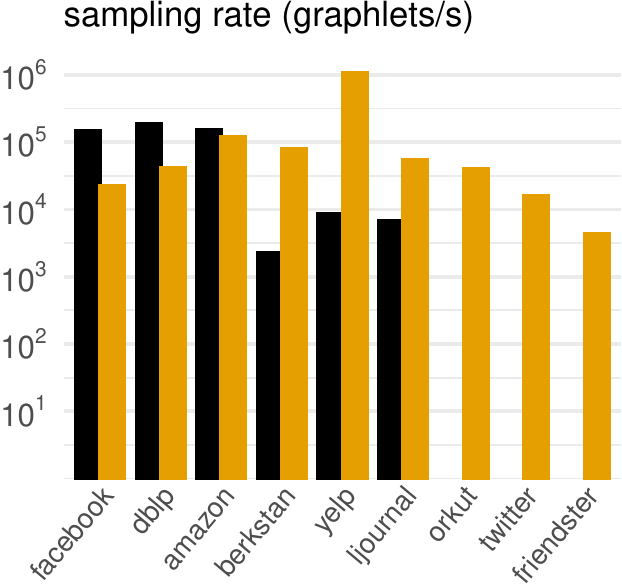}
\caption{Computational performance of \motivo\ (black bars) and \motiveight\ (yellow bars), for $k=8$.
Missing bars represent the failure of \motivo\ by memory exhaustion.
Note the drop in space usage of \motiveight\, which allows it to run even where \motivo\ fails.}
\label{fig:abstime}
\end{figure}

We also measured the performance of \motiveight\ as a function of $n=|V(G)|$ and $m=|E(G)|$, by computing the average time \emph{per million edges} and the average space \emph{per node} on all our graphs, see Figure~\ref{fig:perf}.
We did this to show that \motiveight\ is predictable as a function of $n$ and $m$.
This sets it apart from most graphlet counting algorithms, whose running time varies chaotically.
For instance, ESCAPE takes $5$ seconds on a graph with $1.2$M edges and $11$ days on a graph with $3.6$M edges, a blow-up of $175.000$ times~\cite{Pinar&2017}.
The reason is that the algorithm enumerates all occurrences in $G$ of certain ``critical'' subgraphs (for instance, cliques), whose number can vary wildly between graphs of comparable size.
A similar chaotic behaviour is exhibited by random walks~\cite{Bressan&2017,Bressan&2018b}, since their mixing time depends on the ratio between the maximum and minimum degree of $G$, raised to the power of $\Theta(k)$~\cite{Agostini&19}.
\begin{figure}[h!]
\centering
\hfill
\includegraphics[width=.47\textwidth]{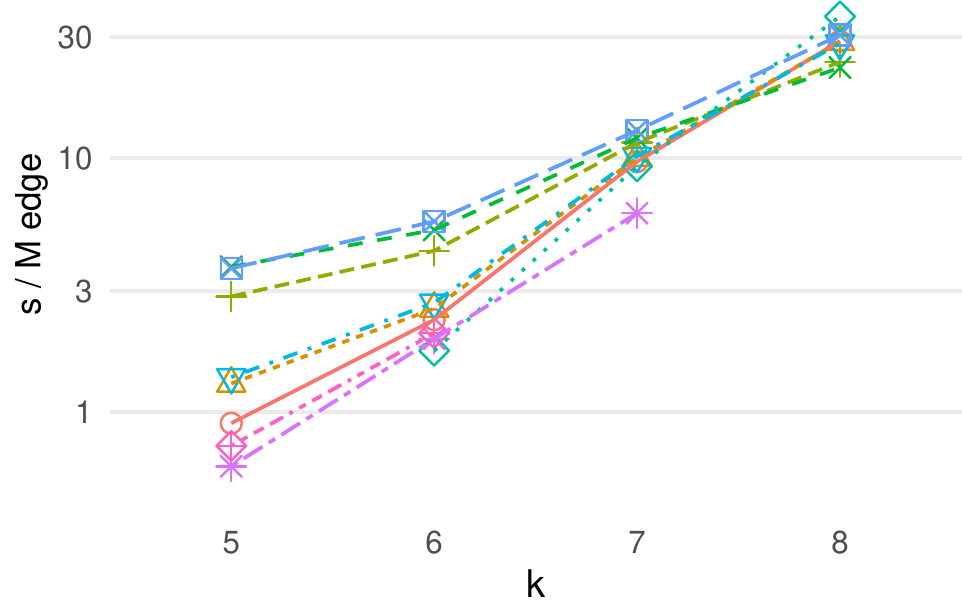}
\hfill
\includegraphics[width=.47\textwidth]{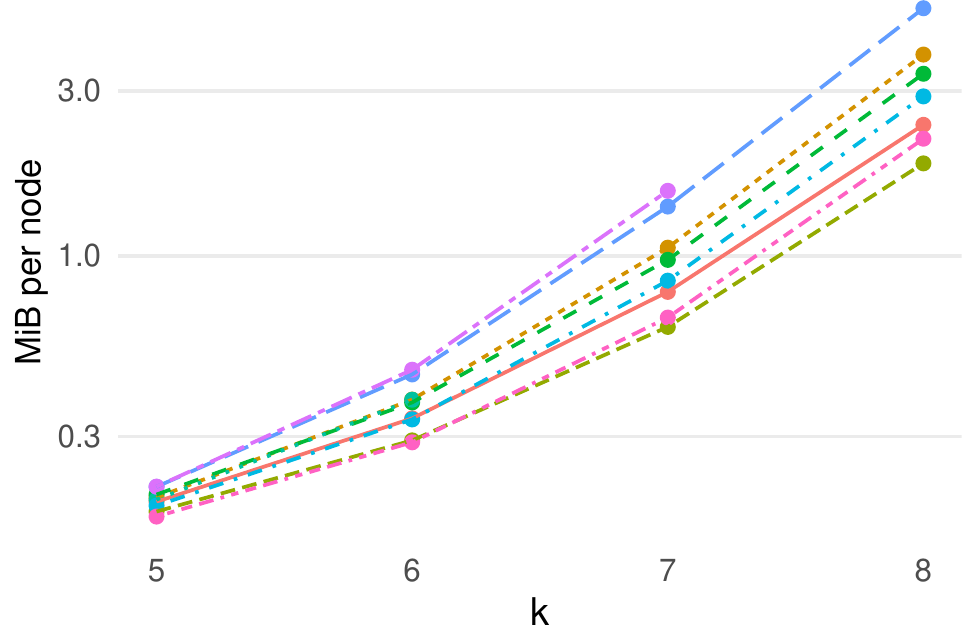}
\hfill\phantom{.}
\caption{Build-up time in seconds per million edge, and space usage in bits per input node, on all our graphs, showing how \motiveight\ is predictable as a function of $n$, $m$, and $k$.
}
\label{fig:perf}
\end{figure}

\textbf{Comparison against CC.}
We compare \motivo\ and \motiveight\ against CC in Table~\ref{tab:speedup} and Table~\ref{tab:speedup_2}.
For each graph we report the largest $k$ for which CC ran, without dying by memory exhaustion or integer overflow.
For the space usage, we compare the main memory used by CC to the total external memory usage of our algorithms (recall that CC works in main memory).
The sampling rate refers to uniform sampling, which is the only one supported by CC.
The sampling speed of AGS is similar, and never more than $40\%$ lower than uniform sampling.
The only exception is the \yelp\ graph, on which AGS for $k=8$ is $20\times$ slower than uniform sampling.
But it is also much more accurate and still $10\,000 \times$ faster than CC, as we show below.

\begin{table}[ht]
\resizebox{\textwidth}{!}{%
\begin{tabular}{llrrrrrrrrr}
graph & k & \multicolumn{3}{l}{\small build-up time (seconds)} &
\multicolumn{3}{l}{\small build-up space (GB)} &
\multicolumn{3}{l}{\small sampling rate (motifs/sec) } \\ & & CC & \motivo\ & speedup & CC & \motivo\ & reduction & CC & \motivo\ & speedup
\\\midrule
\facebook\ & 9 & 860 & 86 & 10$\times$ & 33 & 5 & 7$\times$ & 5& 70k & 14\,000$\times$\\
\facebook\ & 8 & 95 & 17 & $5.5\times$ & 24 & 1.4 & $17\times$ & 419& 149k & 356$\times$\\
\dblp\ & 9 & 1245 & 320 & 3.9$\times$ & 43 & 44 & 1$\times$ & 72 & 112k& 1\,556$\times$\\
\dblp\ & 8 & 182 & 82 & $2.2\times$ & 30 & 27 & $1\times$ & 679 & 186k& 274$\times$\\
\amazon\ & 9 & 376 & 84 & 2.2$\times$ & 49 & 51 & 1$\times$ & 226 & 87k & 385$\times$\\
\amazon\ & 8 & 140 & 87  & $1.6\times$ & 33 & 31 & $1\times$ & 1556 & 150k & 96$\times$\\
\bstan\ & 5 & 14 & 7 & 2$\times$ & 18 & 0.5 & 36$\times$ & 160 & 6770 & 42$\times$\\
\yelp\ & 5 & 167 & 71 & 2.4$\times$ & 36 & 4.5 & 8$\times$ & 20 & 910 & 45$\times$\\
\lj\ & 6 & 306 & 99 & 3$\times$ & 36 & 9.5 & 4$\times$ & 295 & 12k & 41$\times$\\
\orkut\ & 5 & 225 & 40 & 5.6$\times$ & 27 & 3.2 & 8$\times$ & 295 & 15k & 51$\times$
\end{tabular}
}
\caption{Computational performance of \motivo\ versus CC.}
\label{tab:speedup}
\end{table}

\begin{table}[ht]
\resizebox{.9\textwidth}{!}{%
\begin{tabular}{llrrrrrrrrr}
graph & k & \multicolumn{3}{l}{\small build-up time (seconds)} &
\multicolumn{3}{l}{\small build-up space (GB)} &
\multicolumn{3}{l}{\small sampling rate (motifs/sec) } \\ & & CC & LM & speedup & CC & LM & reduction & CC & LM & speedup
\\\midrule
\facebook\ & 8 & 95 & 22 & 4.3$\times$ & 24 & 0.2 & 114$\times$ & 420& 23k & 55$\times$\\
\dblp\ & 8 & 182 & 77 & 2.4$\times$ & 30 & 1.7 & 1.8$\times$ & 680& 41k& 60$\times$\\
\amazon\ & 8 & 140 & 84 & 1.7$\times$ & 31 & 1.8 & 17$\times$ & 1550 & 120k & 77$\times$\\
\bstan\ & 5 & 14 & 7 & 2$\times$ & 18 & 0.2 & 130$\times$ & 160 & 1.2M & 7\,700$\times$\\
\yelp\ & 5 & 167 & 69 & 2.4$\times$ & 36 & 1.3 & 27$\times$ & 20 & 4.2M & 210\,000$\times$\\
\lj\ & 6 & 306 & 79 & 3.9$\times$ & 36 & 1.8 & 20$\times$ & 295 & 112k & 380$\times$\\
\orkut\ & 5 & 225 & 38 & 6$\times$ & 27 & 0.7 & 40$\times$ & 295 & 162k & 550$\times$
\end{tabular}
}
\raggedright
\caption{Computational performance of \motiveight\ versus CC. Note that the maximum value of $k$ tested is $k=8$ as this is the largest value supported by \motiveight.}
\label{tab:speedup_2}
\end{table}

\subsection{Accuracy of the estimates and performance of AGS}
\label{sub:accuracy}
We evaluate the accuracy of the estimates produced by our algorithm \motiveight.

\paragraph{Uniform sampling.}
First, we consider uniform sampling.
In this case, the accuracy of \motiveight\ and the accuracy of \motivo\ are identical, as they give the same output (only the sampling rate changes).
We started by computing the ground-truth count of the number of copies of each possible $k$-graphlet in each graph.
For $k=5$ we used the exact algorithm ESCAPE~\cite{Pinar&2017} which works well on small graphs (\facebook, \dblp, \amazon, \lj, \orkut).
On all other graphs and/or for $k > 6$, we used as ground truth the average of the counts returned by $20$ independent runs of \motiveight, of which $10$ used uniform sampling and $10$ used AGS.
We then measured the average accuracy of \motiveight\ against the ground truth over $10$ runs.
In each run we took $10$M samples, or stopped the sampling after $600$s ($10$ minutes).
To measure the accuracy, we use the relative error.
We denote by $c_H$ the ground-truth count of a specific graphlet $H$, and let $\hat{c}_H$ be the estimate returned by the algorithm.
Then we define the relative error of $H$ as:
\begin{align}
\text{err}_H = \frac{\hat{c}_H - c_H}{c_H}.
\end{align}
Therefore a value $\text{err}_H \simeq 0$ means $c_H$ has been accurately estimated; whereas $\text{err}_H \gg 0$ means $c_H$ is overestimated, and $\text{err}_H \ll 0$ means $c_H$ is underestimated (and $\text{err}_H=-1$ means no copy of $H$ was found).

Figure~\ref{fig:errdist} shows the distribution of the relative error $\text{err}_H$ for uniform sampling, for $k=7$, on three representative graphs.
The $x$-axis shows the value of $\text{err}_H$, and the $y$-axis the number of distinct graphlets $H$ for which that value is achieved.
\begin{figure*}[h!]
\centering
\includegraphics[width=.9\textwidth]{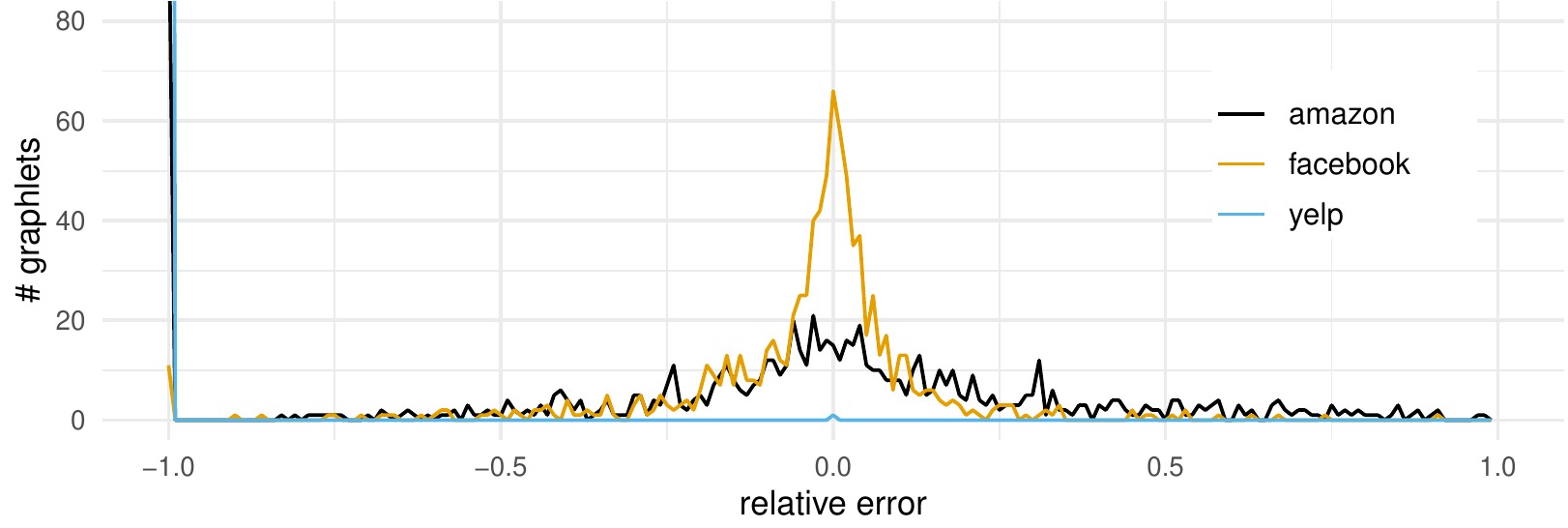}
\caption{Relative error distribution of uniform sampling for $k=7$. On \yelp\ and \amazon\ almost all graphlets have a relative error of $-1$, i.e., they are completely missed.}
\label{fig:errdist}
\end{figure*}
Note that for \yelp\ and \amazon\ almost all graphlets have $\text{err}_H=-1$, as can be seen by the straight segments leaving in the leftmost part of the plot.
This means uniform sampling misses almost all graphlets on \amazon\ and \yelp.

\paragraph{AGS.}
Figure~\ref{fig:errdist2} gives the relative error distribution for AGS.
Note how the distribution is now concentrated around $0$.
This means that AGS gives an accurate estimate of nearly all graphlets, in line with our theoretical predictions.
\begin{figure*}[h!]
\centering
\includegraphics[width=.9\textwidth]{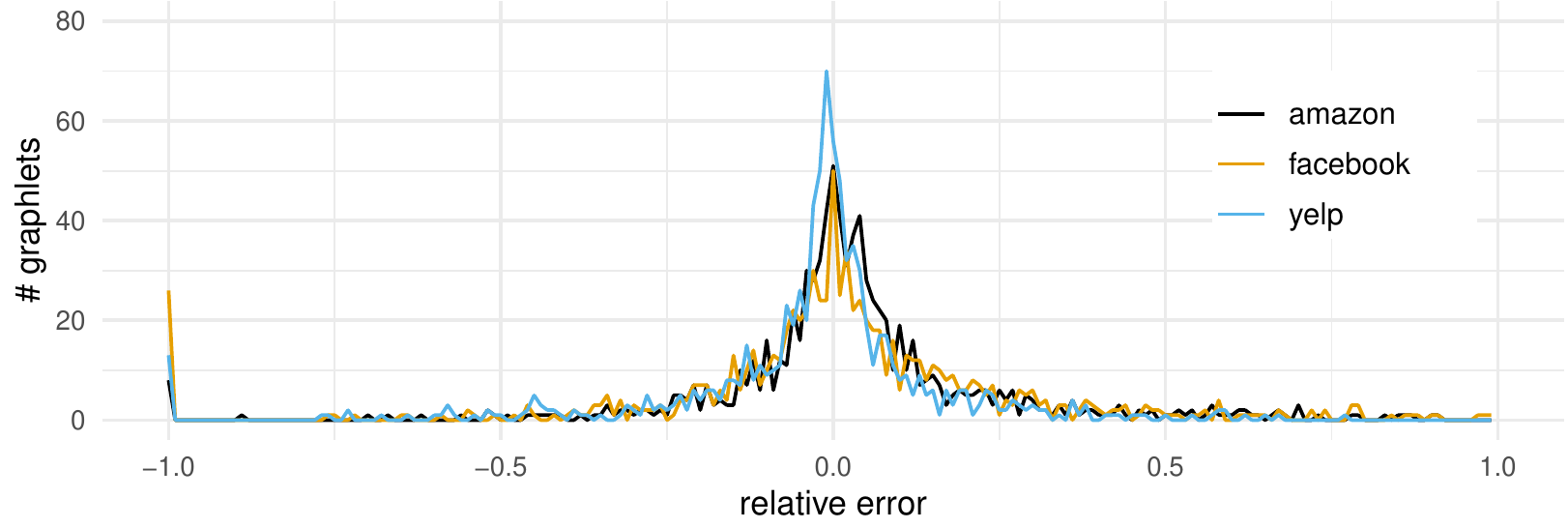}
\caption{Relative error distribution of AGS for $k=7$. Unlike the case of uniform sampling, here almost all graphlets are accurately estimated and have a relative error close to $0$.}
\label{fig:errdist2}
\end{figure*}

To complete the evaluation of AGS, we computed the number of graphlets with relative error below $0.25$.
This number is shown in Figure~\ref{fig:ok}, where the shaded area represents the maximum achievable, i.e., $N_8=11\,117$ (the number of non-isomorphic simple connected graphs on $8$ nodes).
The plot is particularly telling if we look at the \yelp\ graph.
According to our ground truth, in this graph over $99.9996\%$ of all $8$-graphlets are stars.
Thus, we can expect uniform sampling to waste essentially all of its samples by drawing stars.
The figure shows this is exactly the case.
Indeed, uniform sampling achieves a relative error $\le 25\%$ only for the $4$ most frequent graphlets (as a fraction, 0.04\% of the total).
AGS instead achieves a relative error $\le 0.25$ for 9\,860 graphlets (as a fraction, 89\% of the total).
This includes many graphlets with frequency below $10^{-21}$.
(These graphlet are well-estimated in all $10$ runs, so they are not the product of chance).
To find those graphlets, uniform sampling would need more than $10^3$ years even if running at $10^9$ samples per second.
Therefore, we can say that AGS can count graphlets that uniform sampling cannot.

\begin{figure*}[ht]
\centering
\includegraphics[width=.45\textwidth]{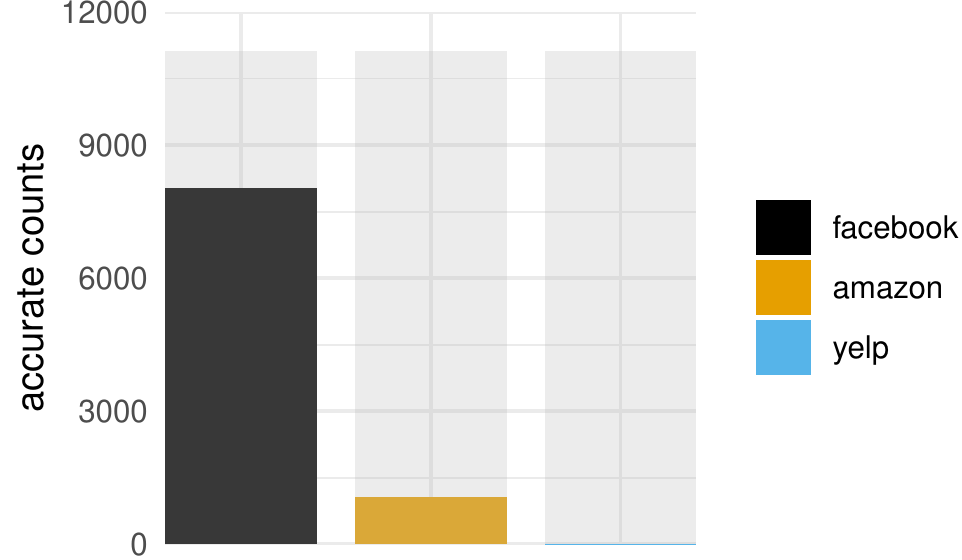}
\hfill
\includegraphics[width=.45\textwidth]{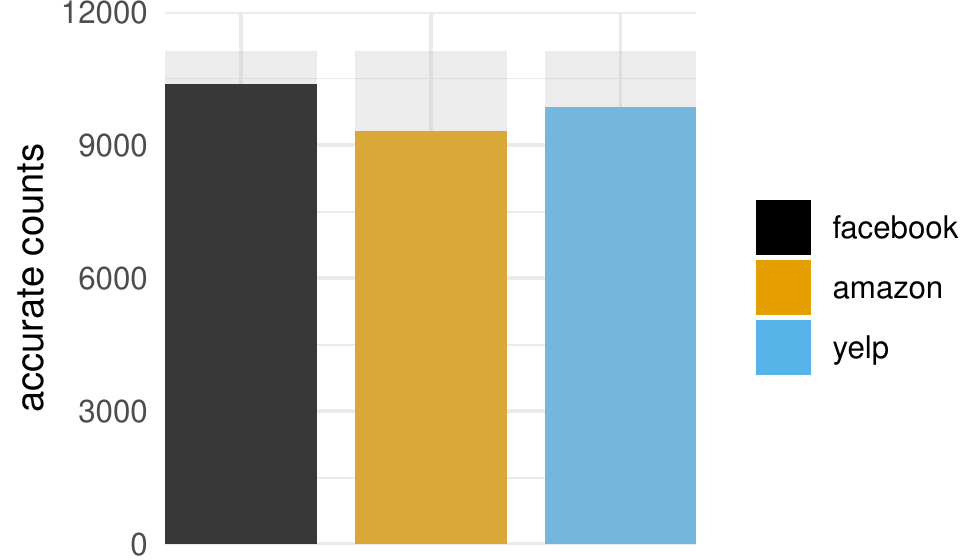}
\caption{Number of $8$-graphlets for which \motiveight\ achieved relative error below $25\%$. Left: uniform sampling. Right: AGS. The shaded area shows the total number of $8$-graphlets, $N_8=11\,117$.}
\label{fig:ok}
\end{figure*}

\section{Conclusions}
In this work we confirm that color coding is an effective technique for sampling and counting motifs in large graphs.
Although this was already suggested by existing work, here we refine the approach and push the color coding motif mining paradigm forward.
It would be interesting to investigate how this color coding approach could be extended to richer and more challenging scenarios. 
Two of these scenarios that fit well with the assumption of large graphs are a distributed computing setting and graphs that evolve in time.

\bibliographystyle{abbrv}
\bibliography{biblio}


\end{document}